%\documentclass[journal, a4paper]{IEEEtran}
%\documentclass[journal]{IEEEtran}
%\documentclass[draftclsnofoot, onecolumn, 11pt, letterpaper]{IEEEtran}
 %comment line for double column

\ifdefined\ONECOLUMN
	\documentclass[journal, 12pt, onecolumn, draftclsnofoot, a4paper]{IEEEtran}
\else
	\documentclass[journal, a4paper, twoside]{IEEEtran}
\fi

\usepackage{fancyhdr}
\usepackage{cite}
\usepackage{amsmath,color}
\usepackage{amssymb}
\usepackage{latexsym}
\usepackage{longtable}
\usepackage{mathrsfs}
\usepackage{color}
\usepackage{multicol}
\usepackage{amsfonts}
\usepackage{array}
\usepackage[nomain,acronym,automake,nopostdot]{glossaries}
\definecolor{sred}{RGB}{130,51,0}
\definecolor{spurple}{RGB}{155,0,155}
%\definecolor{sblue}{RGB}{0,51,160}
\definecolor{sblue}{RGB}{0,0,0}
%\usepackage{draftwatermark}
%\SetWatermarkText{DRAFT}
%\SetWatermarkScale{0.8}
%\usepackage{hyperref}

\ifCLASSINFOpdf
   \usepackage[pdftex]{graphicx}
   \graphicspath{{../pdf/}{../jpeg/}}
   \DeclareGraphicsExtensions{.pdf,.jpeg,.png}
\else
   \usepackage[dvips]{graphicx}
   \graphicspath{{../eps/}}
   \DeclareGraphicsExtensions{.eps}
\fi

\newtheorem{thm}{Theorem}

\newtheorem{prop}{Proposition}
\newtheorem{defn}{Definition}

\newcommand{\matc}[1]{\mbox{\boldmath $\mathcal{#1}$}}

\newcommand{\figref}[1]{Fig. \ref{#1}}

\makeglossaries
\newacronym{ils}{ILS}{integer least-squares}
\newacronym{snr}{SNR}{signal-to-noise ratios}
\newacronym{wl}{WL}{widely linear}
\newacronym{mlsd}{MLSD}{maximum-likelihood sequence detection}
\newacronym{map}{MAP}{maximum a posteriori}
\newacronym{lmmse}{LMMSE}{linear minimum mean-square error}
\newacronym{ut}{UT}{user terminal}
\newacronym{bs}{BS}{base station}
\newacronym{ann}{ANN}{artificial neural networks}
\newacronym{dnn}{DNN}{deep neural network}
\newacronym{snn}{SNN}{super neural networks}
\newacronym{gmnn}{G-MNN}{giant modular neural network}
\newacronym{mnn}{MNN}{modular neural network}
\newacronym{pic}{PIC}{parallel interference cancellation}
\newacronym{mimo}{MIMO}{multiple-input multiple-output}
\newacronym{mud}{MUD}{multiuser detection}
\newacronym{amp}{AMP}{approximate message passing}
\newacronym{zf}{ZF}{zero forcing}
\newacronym{mf}{MF}{matched filter}
\newacronym{cdma}{CDMA}{code division multiple access}
\newacronym{noma}{NOMA}{non-orthogonal multiple access}
\newacronym{csir}{CSIR}{channel state information at the receiver}
\newacronym{sic}{SIC}{successive interference cancellation}
\newacronym{cdf}{CDF}{cumulative distribution function}
\newacronym{ncl}{NC-Learning}{non-coherent learning}
\newacronym{cel}{CE-Learning}{channel equalized learning}
\newacronym{dcl}{DC-Learning}{direct coherent learning}
\newacronym{bp}{BP}{back propagation}
\newacronym{vq}{VQ}{vector quantization}
\newacronym{awgn}{AWGN}{additive white Gaussian noise}
\newacronym{ai}{AI}{artificial intelligence}
\newacronym{dsp}{DSP}{digital signal processing}
\newacronym{csi}{CSI}{channel state information}
\newacronym{nn}{NN}{nearest-neighbor}
\newacronym{clnn}{CL-NN}{cluster-level nearest neighbor}
\newacronym{clknn}{CL-kNN}{cluster-level kNN}
\newacronym{ber}{BER}{bit error rate}
\newacronym{mai}{MAI}{multiple access interference}
% correct bad hyphenation here
\hyphenation{optical net-works semi-conductor}

\begin{document}
%
% paper title
% Titles are generally capitalized except for words such as a, an, and, as,
% at, but, by, for, in, nor, of, on, or, the, to and up, which are usually
% not capitalized unless they are the first or last word of the title.
% Linebreaks \\ can be used within to get better formatting as desired.
% Do not put math or special symbols in the title.
\title{{\color{sblue}A Modular Neural Network {\color{sblue}Based} Deep Learning Approach for MIMO Signal Detection}}
%
%
% author names and IEEE memberships
% note positions of commas and nonbreaking spaces ( ~ ) LaTeX will not break
% a structure at a ~ so this keeps an author's name from being broken across
% two lines.
% use \thanks{} to gain access to the first footnote area
% a separate \thanks must be used for each paragraph as LaTeX2e's \thanks
% was not built to handle multiple paragraphs
%

\author{Songyan Xue, Yi Ma, Na Yi, and Terence E. Dodgson
\thanks{\hrule\vspace{5pt} This work was supported in part by the European Union Horizon 2020 5G-DRIVE project (Grant No. 814956), in part by Airbus Defense and Space, and in part by the UK 5G Innovation Centre (5GIC).}
\thanks{Songyan Xue, Yi Ma and Na Yi are from the Institute for Communication Systems (ICS), University of Surrey, Guildford, England, GU2 7XH. E-mail: (songyan.xue, y.ma, n.yi)@surrey.ac.uk. Tel.: +44 1483 683609.} \thanks{Terence E. Dodgson is from the Airbus Defense and Space, Portsmouth, England, PO3 5PU. E-mail: terence.dodgson@airbus.com.}}%

% The paper headers
\markboth{}%
{}

\maketitle

\begin{abstract}
In this paper, {\color{sblue}we reveal that} artificial neural network (ANN) assisted multiple-input multiple-output (MIMO) {\color{sblue}signal} detection {\color{sblue}can be} modeled as ANN-assisted lossy vector quantization (VQ), named MIMO-VQ, {\color{sblue}which is basically} a joint statistical channel quantization and signal quantization procedure. {\color{sblue}It is found that} the quantization loss increases linearly with the number of transmit antennas, and thus MIMO-VQ scales poorly with the size of MIMO. Motivated by this {\color{sblue}finding}, {\color{sblue}we {\color{sblue}propose} a novel modular neural network based approach, {\color{sblue}termed} MNNet, where the {\color{sblue}whole} network is formed by a set of pre-defined {\color{sblue}ANN} modules. The {\color{sblue}key of ANN} module design {\color{sblue}lies in} the {\color{sblue}integration} of parallel interference cancellation {\color{sblue}in the MNNet, which linearly reduces the interference (or equivalently the number of transmit-antennas) along the feed-forward propagation; and so as the quantization loss}. Our {\color{sblue}simulation} results show that the MNNet approach largely improves the {\color{sblue}deep-learning} capacity with near-optimal performance in various {\color{sblue}cases}}. {\color{sblue}Provided that MNNet} is well modularized, the learning procedure does not need to be applied on the entire network as a whole, but rather at the modular level. {\color{sblue}Due to this reason, MNNet has the advantage of much lower learning complexity than other deep-learning based MIMO detection approaches}.
\end{abstract}

\begin{IEEEkeywords}
Modular neural networks (MNN), deep learning,  multiple-input multiple-output (MIMO), vector quantization (VQ).
\end{IEEEkeywords}

\IEEEpeerreviewmaketitle

\section{Introduction}
\IEEEPARstart{C}{onsider} the discrete-time equivalent baseband signal model of the wireless \gls{mimo} channel with $M$ transmit antennas and $N$ receive antennas ($N\geq M$)

\begin{equation}\label{eq01}
\mathbf{y}=\mathbf{H}\mathbf{x}+\mathbf{v}
\end{equation}
where we define
\begin{itemize}
\item[$\mathbf{y}$:] the received signal vector with $\mathbf{y}\in \mathbb{C}^{N\times 1}$;
\item[$\mathbf{x}$:] the transmitted signal vector with $\mathbf{x}\in \mathbb{C}^{M\times 1}$. Each element of $\mathbf{x}$ is independently drawn from a finite-alphabet set consisting of $L$ elements,  with the equal probability, zero mean, and identical variance $\sigma_x^2$;
\item[$\mathbf{H}$:] the random channel matrix with $\mathbf{H}\in\mathbb{C}^{N\times M}$;
\item[$\mathbf{v}$:] the additive Gaussian noise with $\mathbf{v}\sim CN(0,\sigma_v^2\mathbf{I})$, and $\mathbf{I}$ is the identity matrix.
\end{itemize}
{\color{sblue} The fundamental aim is to find the closest lattice to $\mathbf{x}$ based upon $\mathbf{y}$ and $\mathbf{H}$, which is a classical problem in the scope of {\color{sblue}signal processing for communication. The optimum performance can be achieved through \gls{mlsd} algorithms. The downside of \gls{mlsd} algorithms lies in their exponential computation complexities. On the other hand, low-complexity linear detection algorithms such as \gls{mf}, \gls{zf} and \gls{lmmse}, are often too sub-optimum. This has motivated enormous research efforts towards the best performance-complexity trade-off of MIMO signal detection}; please see \cite{1266912} for an overview.

Recent advances towards the \gls{mimo} signal detection lie in the use of {\color{sblue}deep} learning. The basic idea is to train \gls{ann} as a black box so as to develop its ability of signal detection. The input of \gls{ann} {\color{sblue}is often a concatenation of received signal vector and \gls{csi}, i.e., $\mathbf{y}$ together with $\mathbf{H}$ or more precisely its vector-equivalent version $\breve{\mathbf{h}}$ \cite{DBLP:journals/corr/OSheaEC17}; as depicted in \figref{fig1}-(a). A relatively comprehensive state-of-the-art review can be found in \cite{8382166,8755300}. }
{\color{sblue}Notably, a detection network (DetNet) was proposed in \cite{8642915}, with their key idea to} unfold iterations of projected gradient descent algorithm into deep neural networks. In \cite{DBLP:journals/corr/abs-1812-10044}, a trainable projected gradient detector is proposed for {\color{sblue}overloaded massive-MIMO systems}.  Moreover, in \cite{8646357},  a model-driven deep learning approach, named OAMP-Net, is proposed by incorporating deep learning into the orthogonal approximate message-passing (OAMP) algorithm.}
\begin{figure}[t]
\begin{center}
\includegraphics[width=0.8\columnwidth]{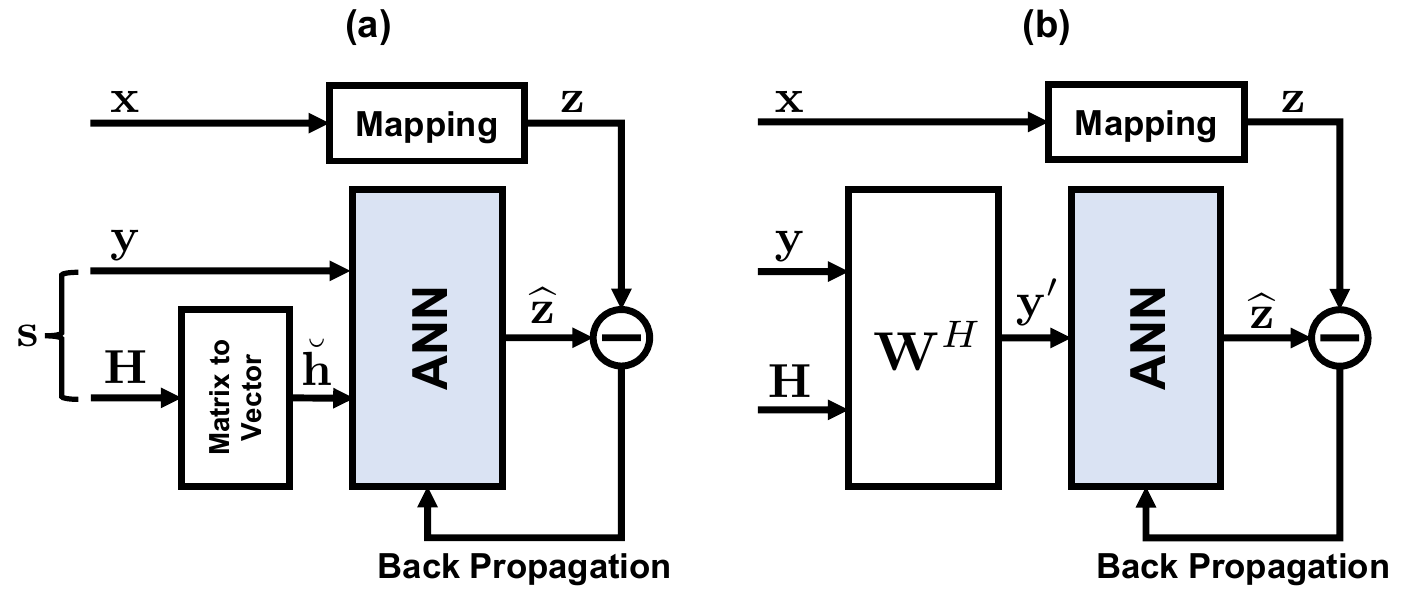}
\caption{{\gls{ann}-assisted coherent \gls{mimo} detection: (a) the \gls{ann} input is a combination of received signal and \gls{csi}; (b) the \gls{ann} input is a linearly filtered received signal.}}\label{fig1}
\end{center}
\end{figure}

Despite an increasingly hot topic, there is an ongoing debate on the use of machine learning for communication systems design, particularly on the modem side \cite{DBLP:journals/corr/abs-1901-08329}. A straightforward question would be: modem based on the simple linear model \eqref{eq01} can be well optimized by hand-engineered means; what are additional values machine learning could bring to us; and what will be the cost? Basically, it is not our aim to join the debate with this paper. However, we find it useful to highlight a number of the key features of the \gls{ann}-assisted \gls{mimo} detection based on published results, as they well motivated our work. Specifically, \gls{ann}-assisted \gls{mimo} {\color{sblue} detection} has the following remarkable advantages:

\subsubsection{Parallel computing ready} 
\gls{ann}-assisted \gls{mimo} receivers are mostly based upon feed-forward neural networks, which have a parallel computing architecture in nature \cite{Takefuji:1992:NNP:128941}. It fits into the trend of high-performance computing technologies that highly rely on parallel processing power to improve the computing speed, the capacity of multi-task execution as well as the computing energy-efficiency. This is an important feature as it equips the receiver with a great potential of providing ultra-low latency and energy-efficient signal processing that is one of key requirements for future wireless networks \cite{DBLP:journals/corr/abs-1812-02858,Yazar2019}.

\subsubsection{Low receiver complexity}
\gls{ann}-assisted \gls{mimo} receivers only involve a number of matrix multiplications, depending on {\color{sblue} the number of} hidden layers involved in the feed-forward procedure. Moreover, they bypass channel matrix inversions or factorizations (such as singular-value decomposition or QR decomposition) which are needed for most of conventional \gls{mimo} receivers including linear \gls{zf}, \gls{lmmse} \cite{4686826}, \gls{sic} \cite{5910122}, sphere decoding \cite{1341066}, and many others \cite{6375940}. The complexity for channel matrix inversions or factorizations is around $\mathcal{O}(NM^2)$ with $(N\geq M)$. This can be affordable complexity for current real-time \gls{dsp} technology as long as the size of the channel matrix is reasonably small (e.g. $M=10$ or smaller) \cite{8416771}. However, such rather prevents the development of future \gls{mimo} technologies that aim to exploit increasingly the spatial degree of freedom for spectral efficiency. One might argue for the use of high-performance parallel computing technologies to mitigate this bottleneck. However, there is a lack of parallel computing algorithms for matrix inversions or something equivalent to this date.

\subsubsection{Good performance-complexity trade-off}
it has been demonstrated that extensively trained \gls{ann}-assisted receivers can be well optimized for their training environments (or channel models). For instance in \cite{8054694}, the performance of \gls{ann}-assisted receiver is very close to that of {\color{sblue}the \gls{mlsd}}. This is rather encouraging result as it reaches a good trade-off between the receiver complexity and the performance.

Despite their advantages, current \gls{ann}-assisted \gls{mimo} receivers face a number of fundamental challenges.
\setcounter{subsubsection}{0}
\subsubsection{\gls{ann} learning scalability}  in \gls{mimo} fading channels, the \gls{ann} blackbox in \figref{fig1}-(a) has its learning capacity rapidly degraded with the growth of transmit antennas \cite{8761999}. Current approaches to mitigate this problem are by means of training the \gls{ann} with channel equalized signals; as depicted in \figref{fig1}-(b). However, in this case,  \gls{ann}-assisted receivers are not able to exploit maximally the spatial diversity-gain due to the multiuser orthogonalization enabled by channel equalizers \cite{7244171}, and as a consequence, their performances go far away from the \gls{mlsd} receiver.

\subsubsection{Learning expenses}
the \gls{ann} learning process often involves very expensive computation cost and energy consumption as far as the conventional computing architecture is concerned. The aim of reducing \gls{ann} learning expenses has recently motivated a new research area on {\em non von Neumann} computing architectures \cite{10.1007/11802839_32}.

\subsubsection{Training set over-fitting}
an \gls{ann}-assisted receiver trained for a specific wireless environment (or channel model) is often not suitable for another environment (or channel model) \cite{Hawkins2004ThePO}. This issue can be viewed more positively. For instance in urban areas, access points (AP), such as LTE eNBs or 5G gNBs, often have their physical functions optimized for local environments \cite{1181869}. It could be an advantage if \gls{mimo} transceivers integrated in APs can also be optimized for their local environments through machine learning.

\subsubsection{The black-box problem}
it is well recognized that the black-box model is challenging the system reliability and maintenance job. ``{\em How to make \gls{ai} more describable or explainable?}'' becomes an increasingly important research topic in the general \gls{ai} domain \cite{article1,article2,8466590}. 

Again, we stress that the objective of this paper is not to offer a comparison between machine learning and hand-engineered approaches in the scope of \gls{mimo} detection. {\color{sblue}Instead,  our aim is to develop a deeper understanding of the fundamental behaviors of \gls{ann}-assisted \gls{mimo} detection, based on which we can find a scalable approach for that.
 {\color{sblue}Major} contributions of this paper {\color{sblue}include}:
\begin{itemize}
\item An {\color{sblue}extensive} study of the \gls{ann}-assisted \gls{mimo} detection model in terms of its performance and scalability. By borrowing the basic concept from the \gls{ann} \gls{vq} model used for source encoding in \cite{Ahalt93}, our work reveals that \gls{ann}-assisted \gls{mimo} detection can be modeled as the \gls{ann}-assisted lossy \gls{vq}, named \gls{mimo}-\gls{vq}, which is naturally a joint statistical channel quantization and massage quantization procedure. By means of the \gls{nn} mapping, \gls{mimo}-\gls{vq} is shown to be integer least-squares optimum with its complexity {\color{sblue}growing} exponentially with the number of transmit antennas;

\item The analysis of codebook-vector combination approaches including \gls{clnn} and \gls{clknn}. It is shown that both approaches {\color{sblue}require much less neurons than the \gls{nn}-based \gls{mimo}-\gls{vq}}. However, those low-complexity approaches introduce considerable loss to the statistical channel quantization (equivalently increase of channel ambiguity for the message quantization), which results in learning inefficiency particularly at higher \gls{snr};

\item The development of MNNet, a novel \gls{mnn} based \gls{mimo} detection approach that {\color{sblue}can achieve} near-optimal performance with much lower computational complexity {\color{sblue}than state-of-the-art} deep-learning based \gls{mimo} detection methods. The {\color{sblue}idea} of MNNet {\color{sblue}lies in the integration of \gls{pic} {in deep \gls{mnn}, which linearly reduces the interference (or equivalently the number of transmit-antennas) along the feed-forward propagation; and so as the quantization loss. Moreover, the learning procedure of MNNet can be applied at the modular level; and this  largely reduces the learning complexity}}.
\end{itemize}}

The rest of this paper is organized as follows. Section II presents the {\color{sblue}novel} \gls{ann}-assisted \gls{mimo}-\gls{vq} model. Section III presents fundamental behaviors of the \gls{ann}-assisted \gls{mimo}-\gls{vq} as well as the learning scalability. Section IV presents the novel MNNet {\color{sblue}approach} as well as {\color{sblue}the} performance and complexity analysis. Simulation results are presented in Section V; and finally, Section VI draws the conclusion.

\section{{\color{sblue}\gls{mimo}-\gls{vq} Model for The \gls{ann}-Assisted \gls{mimo} Detection}}
\label{sec2}
\subsection{Concept of Vector Quantization}\label{sec2a}
A vector quantizer is a statistical source encoder, which aims to efficiently and compactly represent the original signal using a digital signal. The encoded signal shall retain the essential information contained in the original signal \cite{Gersho}. More rigorously, \gls{vq} can be described by
\begin{defn}[\cite{Ahalt93}]\label{def1}
Given an arbitrary input vector $\mathbf{s}\in\mathbb{R}^{K\times 1}$ and a finite set $\matc{A}=\{0, 1, ...,  J-1\}$, \gls{vq} forms a mapping: $\mathbf{s}\longrightarrow j$, with $j\in\matc{A}$ denoting the index of codebook vectors. Given $p_j$ the probability when $\mathbf{s}$ is mapped to $j$,  the average rate of the codebook vectors is $R=-\sum_{j=0}^{J-1}p_j\log_2p_j$. Since each input vector has $K$ components, the number of bits required to encode each input vector component is $(R)/(K)$. Moreover, the compression rate of \gls{vq} is: $r=(R)/\hbar(\mathbf{s})$, where $\hbar(\mathbf{s})$ is the entropy of $\mathbf{s}$. 
\end{defn}

The aim of \gls{vq} is to find the optimum codebook which minimizes the average quantization loss (distortion) given the codebook size $J$. 
\begin{defn}\label{def2}
Given $\hat{\mathbf{s}}_j\in\mathbb{R}^{K\times1}$ the reconstructed input vector (or called the anchor vector) corresponding to $j$, and $\Delta(\mathbf{s}, \hat{\mathbf{s}}_j)$ the quantization loss when $\mathbf{s}$ is mapped to $j$, the average quantization loss is 
\begin{equation}\label{eq02}
\overline{\Delta}=\underset{\mathbf{s}}{\mathbb{E}}\left(\sum_{j=0}^{J-1}p_j\min_j\Delta(\mathbf{s}, \hat{\mathbf{s}}_j)\right)
\end{equation}
where $\underset{\mathbf{s}}{\mathbb{E}}(\cdot)$ denotes the expectation over $\mathbf{s}$. 
\end{defn}

By means of minimizing the average quantization loss \eqref{eq02},  \gls{vq} effectively partitions input vectors into $J$ clusters, and forms anchor vectors $\hat{\mathbf{s}}_j, \forall j$; such is called the Voronoi partition. 

\subsection{\gls{ann}-Assisted Vector Quantization}\label{sec2b}
\gls{ann} architecture for \gls{vq} is rather straightforward. Consider a neural network having $J$ neurons, with each yielding a binary output $\hat{z}_j\in\{0, 1\}$. Let $\mathbf{w}_j$ be the weighting vector for the $j^\text{th}$ neuron. \gls{ann} can measure the quantization loss $\Delta(\mathbf{s}, \mathbf{w}_j)$ and apply the \gls{nn} rule to determine the output $\hat{z}_j$, i.e.,
\begin{equation}\label{eq03}
\hat{z}_j=\left\{\begin{array}{cl}
1,&\Delta(\mathbf{s}, \mathbf{w}_j)=\min\Delta(\mathbf{s}, \mathbf{w}_l), l=0, ..., J-1\\
0,&\text{otherwise}
\end{array}\right.
\end{equation}
Right after the \gls{ann} training, we let $\hat{\mathbf{s}}_j=\mathbf{w}_j$. Such shows how input vectors are optimally partitioned into $J$ clusters.

The \gls{ann} training procedure often starts with a random initialization for the weighting vectors $\mathbf{w}_j, \forall j$. Then, the training algorithm iteratively adjusts the weighting vectors with
\begin{equation}\label{eq04}
\mathbf{w}_j(i+1)=\mathbf{w}_j(i)+\beta(i)(\mathbf{s}-\mathbf{w}_j(i))z_j(i)
\end{equation}
where $i$ denotes the number of iterations, and $\beta(i)$ is the adaptive learning rate, which typically decreases with the growth of iterations. This is the classical Hebbian learning rule for the competitive unsupervised learning \cite{hebb-organization-of-behavior-1949}, which has led to a number of enhanced or extended versions such as the Kohonen self-organizing feature map algorithm \cite{Feng88}, learning VQ \cite{Kohonen90}, and frequency selective competitive learning algorithm \cite{Grossberg76}. Nevertheless, all of those unsupervised learning algorithms go beyond the scope of our research problem, and thus we skip the detailed discussion in this paper. 

\subsection{\gls{ann}-Assisted \gls{mimo} Vector Quantization}\label{sec2c}
Let us form an input vector $\mathbf{s}=[\breve{\mathbf{h}}^T,\mathbf{y}^T]^T$, {\color{sblue}where $\mathbf{y}$ is the received signal vector; $\breve{\mathbf{h}}$ is the \gls{csi} vector as already defined in Section I; and $[\cdot]^T$ stands for the matrix or vector transpose. All notations are now redefined in the real domain, i.e., $\mathbf{y}\in\mathbb{R}^{N\times 1}$, $\mathbf{H}\in\mathbb{R}^{N\times M}$, $\mathbf{x}\in\mathbb{R}^{M\times 1}$.} {\color{sblue}Since $\breve{\mathbf{h}}$ is the vector counterpart of $\mathbf{H}$, the length of $\breve{\mathbf{h}}$ is $(NM)$, and consequently the length of $\mathbf{s}$ is: $K=N(M+1)$.
Note that the conversion from complex signals to their real-signal equivalent version doubles the size of corresponding vectors or matrices \cite{Carlos18} (can be also found in Section \ref{sec4}). However, we do not use the doubled size for the sake of mathematical notation simplicity.  
} 

Furthermore, we form a finite set $\matc{A}$ with the size $J=L^M$. Every element in $\matc{A}$ forms a bijection to their corresponding codebook vector $\mathbf{x}_j\in\mathbb{R}^{M\times 1}$. According to {\em Definition 1}, \gls{vq} can be employed to form the mapping: $\mathbf{s} \longrightarrow j\longrightarrow\hat{\mathbf{x}}_j$.  For the \gls{mimo} detection, $\hat{\mathbf{x}}$ is the reconstructed version of $\mathbf{x}$, and thus it should follow the same distribution as $\mathbf{x}$ (i.e. the equal probability). Then, the average rate of codebook vectors becomes $R=M\log_2L$; and the compression rate of \gls{vq} is
\begin{equation}\label{eq05}
r=\frac{M\log_2L}{\hbar(\mathbf{s})}
\end{equation}
The average quantization loss given in {\em Definition 2} becomes
\begin{equation}\label{eq06}
\overline{\Delta}=\frac{1}{J}\sum_{j=0}^{J-1}\underset{\mathbf{s}}{\mathbb{E}}\Big(\min_j\Delta(\mathbf{s},\hat{\mathbf{s}}_j)\Big)
\end{equation}

{\color{sblue} Theoretically, the \gls{ann} architecture introduced in Section II-B can be straightforwardly employed for the \gls{mimo}-\gls{vq}. By means of the NN mapping, there will be $\bar{J}=L^M$ neurons integrated in the \gls{ann}, with each employing a binary output function. This might cause the well-known {\em curse of dimensionality} problem since neurons required to form the \gls{ann} grow exponentially with the number of transmit antennas and polynomially with the modulation order. For instance, there is a need of {\color{sblue} $\bar{J}=16^4$} neurons to detect \gls{mimo} signals sent by $M=4$ transmit antennas, with each modulated by $16$-QAM; and this is surely not a scalable solution.}

A considerable way to scale up the \gls{ann}-assisted \gls{mimo}-\gls{vq} is to employ the k-nearest neighbor (kNN) rule (see \cite{6313426}) to determine the output $\hat{z}_j$ \footnote{Note that the notation k here is not the dimension $\mathbb{R}^{K\times 1}$ defined in Section II-A. We take the notation directly from the name of kNN; and it would not be further used in other sections.}, i.e.,
\begin{equation}\label{eq07}
\hat{z}_j=\left\{\begin{array}{cl}
1,&\Delta(\mathbf{s}, \mathbf{w}_j)\in\{\text{k-min}\Delta(\mathbf{s}, \mathbf{w}_l),\forall l\}\\
0,&\text{otherwise}
\end{array}\right.
\end{equation}
where $\text{k-min}(\cdot)$ stands for the function to obtain the K minimums in a set. By such means, we can have $J\choose{\text{K}}$ codebook vectors through a combination of $\mathbf{w}_j, \forall j$. It is possible to further scale up \gls{mimo}-\gls{vq} through kNN with $\text{k}=1,...,K$, by means of which we have the capacity to form $\sum_{\text{k}=1}^\text{K} {J\choose{\text{k}}}$ codebook vectors \cite{10.1117/12.279444}.

{\color{sblue} Unlike conventional \gls{vq} techniques, \gls{mimo}-\gls{vq} can be associated with the supervised learning \cite{8761999}. At the training stage, the \gls{ann} exactly knows the mapping: $\mathbf{s}\longrightarrow j$, and thus it can iteratively adapt $\mathbf{w}_j$ to minimize the distortion $\Delta(\mathbf{s}, \mathbf{w}_j)$; and such largely reduces the learning complexity. Moreover, the knowledge of mapping $\mathbf{s}\longrightarrow j$ facilitates the combination of codebook vectors by which the number of required neurons can be significantly reduced.} Currently, there are two approaches for the codebook-vector combination:

\subsubsection{Cluster-level nearest neighbor} consider an \gls{ann} consisting of {\color{sblue} $\bar{J}=ML$} neurons. Neurons are divided into $M$ clusters indexed by $m$, corresponding to the $m^\mathrm{th}$ element in $\mathbf{x}$. Each cluster has $L$ neurons indexed by $l$, corresponding to the $l^\mathrm{th}$ state of an element in $\mathbf{x}$.  \gls{vq} forms a mapping $\mathbf{s}\longrightarrow(l_1, ..., l_M)$, with $l_m$ denoting the $l^\text{th}$ neuron within the $m^\text{th}$ cluster. The output is now indexed by $\hat{z}_{l,m}$, which is determined by the \gls{clnn} rule, $\forall m$
\begin{equation}\label{eq08}
\hat{z}_{l,m}=\left\{\begin{array}{cl}
1,&\Delta(\mathbf{s}, \mathbf{w}_{l,m})=\min\Delta(\mathbf{s}, \mathbf{w}_{\ell,m}), \forall\ell\\
0,&\text{otherwise}
\end{array}\right.
\end{equation}
By such means, one can use $(LM)$ neurons to represent $L^M$ codebook vectors. 

\subsubsection{Cluster-level kNN} consider an \gls{ann} consisting of {\color{sblue} $\bar{J}=M\log_2L$} neurons. Again, we can divide neurons into $M$ clusters. Each cluster now has $\alpha(=\log_2L)$ neurons. \gls{vq} maps $\mathbf{s}$ to the corresponding output $\hat{z}_{l,m}$ through the \gls{clknn} rule
\begin{equation}\label{eq09}
\hat{z}_{l,m}=\left\{\begin{array}{cl}
1,&\Delta(\mathbf{s}, \mathbf{w}_{l,m})\in\text{k-min}\Delta(\mathbf{s}, \mathbf{w}_{\ell,m}), \forall\ell\\
0,&\text{otherwise}
\end{array}\right.
\end{equation}
with $\text{k}\in\{1,...,L\}$. By such means, one can use $M\log_2L$ neurons to represent $L^M$ codebook vectors.

It is worth noting that both \gls{clnn} and \gls{clknn} are more suitable for supervised learning than unsupervised learning as far as the learning complexity is concerned. The ideas themselves are not novel. However, it is the first time in the literature that uses the \gls{vq} model to mathematically describe the \gls{ann}-assisted \gls{mimo} detection; and they lay the foundation for further investigations in other sections. 

\section{Fundamental Behaviors of \gls{mimo} Vector Quantization}\label{sec3}
The \gls{vq} concept presented in Section \ref{sec2} equips us with sufficient basis to develop a deeper understanding on fundamental behaviors of the \gls{ann}-assisted \gls{mimo}-\gls{vq}. 

\subsection{\gls{mimo}-\gls{vq} Compression Behavior}\label{sec3a}
In applications such as image, video, or speech compression, \gls{vq} has a controllable compression rate or quantization loss by managing the codebook size \cite{3776,Bovik:2005:HIV:1200338}. However, this is not the case for \gls{mimo}-\gls{vq}, as the codebook size is determined by the original signal. 
\begin{thm}\label{thm1}
Assuming the \gls{mimo} channel matrix $\mathbf{H}$ to be i.i.d.,  the compression rate of \gls{mimo}-\gls{vq} decreases linearly or more with the number of transmit antennas $M$. Moreover, we have $\lim_{M\rightarrow\infty}r=0$.
\end{thm}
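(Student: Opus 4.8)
The plan is to argue directly from the closed-form compression rate in \eqref{eq05}, $r = (M\log_2 L)/\hbar(\mathbf{s})$, by showing that the numerator grows only linearly in $M$ while the denominator grows strictly faster. The numerator is immediate: since the $M$ transmitted symbols are drawn independently and uniformly from an $L$-ary alphabet, the codebook rate is exactly $R = M\log_2 L$, linear in $M$. Everything therefore hinges on a lower bound for the (differential) entropy of the composite input $\mathbf{s} = [\breve{\mathbf{h}}^T, \mathbf{y}^T]^T$.

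First I would split the entropy with the chain rule,
\begin{equation}
\hbar(\mathbf{s}) = \hbar(\breve{\mathbf{h}}) + \hbar(\mathbf{y}\mid\breve{\mathbf{h}}),\nonumber
\end{equation}
and then invoke the i.i.d.\ hypothesis on $\mathbf{H}$ --- this is exactly the place where that assumption is used. Since the $NM$ entries of $\breve{\mathbf{h}}$ are mutually independent, differential entropy is additive over them, giving $\hbar(\breve{\mathbf{h}}) = NM\,h_h$, where $h_h$ denotes the finite per-coefficient entropy of the fading distribution. Because the model enforces $N \geq M$, this channel term alone is of order at least $M^2$, i.e.\ it grows quadratically in $M$.

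The remaining step is to confirm that the residual term $\hbar(\mathbf{y}\mid\breve{\mathbf{h}})$ is only of order $N$ and hence asymptotically negligible against $NM\,h_h$. A clean lower bound follows from the fact that conditioning cannot increase entropy: $\hbar(\mathbf{y}\mid\breve{\mathbf{h}}) \geq \hbar(\mathbf{y}\mid\breve{\mathbf{h}},\mathbf{x}) = \hbar(\mathbf{v})$, since given both $\mathbf{H}$ and $\mathbf{x}$ the vector $\mathbf{y}=\mathbf{H}\mathbf{x}+\mathbf{v}$ is a deterministic shift of the Gaussian $\mathbf{v}$, whose entropy scales linearly in $N$. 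Combining the two pieces gives $\hbar(\mathbf{s}) \geq NM\,h_h + \hbar(\mathbf{v})$, so that for large $M$
\begin{equation}
r \;\leq\; \frac{M\log_2 L}{NM\,h_h + \hbar(\mathbf{v})} \;\leq\; \frac{\log_2 L}{M\,h_h}\,\bigl(1+o(1)\bigr),\nonumber
\end{equation}
using $N\geq M$. Equivalently, the expansion factor $1/r = \hbar(\mathbf{s})/(M\log_2 L) \geq N h_h/\log_2 L \geq M h_h/\log_2 L$ grows at least linearly in $M$; this is the precise sense of ``decreases linearly or more,'' with the ``or more'' covering channels in which $N$ grows super-linearly in $M$ (where $r$ decays even faster). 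The stated limit $\lim_{M\to\infty} r = 0$ is then immediate.

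The main obstacle I anticipate is a regularity issue rather than an algebraic one: differential entropy is not sign-definite, so the $NM\,h_h$ term genuinely dominates only if the per-coefficient channel entropy $h_h$ is a fixed positive constant (as holds, e.g., for Rayleigh/Gaussian fading of nondegenerate variance) and $\hbar(\mathbf{v})$ is a fixed-order $\mathcal{O}(N)$ quantity. I would therefore make the finiteness and positivity of $h_h$ an explicit hypothesis, and take care that the $\mathcal{O}(N)$ noise term --- possibly negative at low noise --- cannot cancel the dominant quadratic growth, which it cannot, being of lower order in $M$.
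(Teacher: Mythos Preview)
Your proof is correct and follows essentially the same route as the paper: split $\hbar(\mathbf{s})$ via the chain rule, use the i.i.d.\ assumption to obtain $\hbar(\breve{\mathbf{h}})=NM\cdot h_h$, and then invoke $N\geq M$ to conclude $r=\mathcal{O}(1/M)\to 0$. The only cosmetic difference is that the paper lower-bounds the residual term by $\hbar(\mathbf{x})$ rather than your $\hbar(\mathbf{v})$; your explicit use of the conditional form of the chain rule and your caveat about the sign of differential entropy are, if anything, more careful than the paper's own argument.
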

\begin{proof}
Given the i.i.d. \gls{mimo} channel, we will have $\hbar(\breve{\mathbf{h}})=\sum_{n=0}^{N-1}\sum_{m=0}^{M-1}\hbar(H_{nm})=MN\sigma_H^2$, where $H_{nm}$ denotes the $(n, m)^\text{th}$ entry of $\mathbf{H}$ that has the entropy $\sigma_H^2$. The entropy of $\mathbf{s}$ is
\begin{IEEEeqnarray}{ll}
\hbar(\mathbf{s})&=\hbar(\mathbf{y})+\hbar(\mathbf{H})\label{eq10}\\
&\geq\hbar(\mathbf{x}|\mathbf{H})+\hbar(\mathbf{H})\label{eq11}\\
&\geq\hbar(\mathbf{x})+\hbar(\mathbf{H})\label{eq12}\\
&\geq M(\sigma_x^2+N\sigma_H^2)\label{eq13}
\end{IEEEeqnarray}
The equality in \eqref{eq11} holds in the noiseless case, and $\hbar(\mathbf{x}|\mathbf{H})=\hbar(\mathbf{x})$
due to $\mathbf{x}$ being independent from $\mathbf{H}$ in communication systems.
The information compression rate is therefore given by
\begin{IEEEeqnarray}{ll}
r=\frac{\hbar(\mathbf{x})}{\hbar(\mathbf{s})}&\leq\frac{\sigma_x^2}{\sigma_x^2+N\sigma_H^2}\label{eq14}\\
&\leq(1+M(\sigma_H^2)/(\sigma_x^2))^{-1}\label{eq15}
\end{IEEEeqnarray}
We can also use the \gls{vq} compression rate \eqref{eq05} to obtain
\begin{equation}\label{eq16}
r\leq(1+M(\sigma_H^2)/(\log_2L))^{-1}
\end{equation}
as far as the quantization of $\mathbf{x}$ is concerned. Both \eqref{eq15} and \eqref{eq16} show that the upper bound of the compression rate decreases linearly with $M$. For $M\rightarrow\infty$, it is trivial to justify that the upper bound tends to zero. {\em Theorem 1} is therefore proved. 
\end{proof}

The information-theoretical result in {\em Theorem 1} implies that \gls{mimo}-\gls{vq} introduces the rate distortion $(1-r)$, which grows at least linearly with $M$. Moreover, the rate distortion  approaches $100\%$ due to $\lim_{M\rightarrow\infty}r=0$; and such challenges the scalability of \gls{mimo}-\gls{vq}.

\subsection{MIMO-VQ Quantization Behavior}\label{sec3b}
We use the nearest-neighbor rule \eqref{eq03} to study the \gls{mimo}-\gls{vq} quantization behavior, appreciating its optimality as well as the simple mathematical form. Our analysis starts from the following hypothesis: 
\begin{prop}
Split the reconstructed vector $\hat{\mathbf{s}}_j$ into two parts: $\hat{\mathbf{s}}_j=[(\hat{\mathbf{s}}_j^{(1)})^T, (\hat{\mathbf{s}}_j^{(2)})^T]^T$, with $\hat{\mathbf{s}}_j^{(1)}\in\mathbb{R}^{N\times1}$, $\hat{\mathbf{s}}_j^{(2)}\in\mathbb{R}^{(MN)\times1}$. The quantization loss $\Delta(\mathbf{s}, \hat{\mathbf{s}}_j)$ can be decoupled into
\begin{equation}\label{eq17}
\Delta(\mathbf{s}, \hat{\mathbf{s}}_j)=\Delta(\mathbf{y}, \hat{\mathbf{s}}_j^{(1)})+\Delta(\breve{\mathbf{h}}, \hat{\mathbf{s}}_j^{(2)})
\end{equation}
\end{prop}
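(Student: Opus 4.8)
The plan is to reduce the claimed decoupling to the additive separability of the distortion metric over concatenated vectors. First I would fix the form of $\Delta$: throughout the VQ development the quantization loss attached to the nearest-neighbor rule \eqref{eq03} is the squared Euclidean distance, $\Delta(\mathbf{a},\mathbf{b})=\|\mathbf{a}-\mathbf{b}\|^2$. That this is the implied metric is confirmed by the competitive-learning update \eqref{eq04}, which is precisely a stochastic-gradient step on $\|\mathbf{s}-\mathbf{w}_j\|^2$; it is also the distortion that makes the Voronoi partition of {\em Definition 2} coincide with the NN assignment. I would make this choice explicit at the outset, since the whole statement hinges on $\Delta$ being a component-wise additive metric (the ordinary, non-squared Euclidean distance would not decouple, as $\sqrt{a+b}\neq\sqrt{a}+\sqrt{b}$).

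Next I would exploit the concatenated structure of the input. By construction $\mathbf{s}$ stacks the CSI vector $\breve{\mathbf{h}}\in\mathbb{R}^{MN\times1}$ and the received vector $\mathbf{y}\in\mathbb{R}^{N\times1}$, and the anchor $\hat{\mathbf{s}}_j$ is partitioned conformably into the block $\hat{\mathbf{s}}_j^{(1)}\in\mathbb{R}^{N\times1}$ aligned with $\mathbf{y}$ and the block $\hat{\mathbf{s}}_j^{(2)}\in\mathbb{R}^{MN\times1}$ aligned with $\breve{\mathbf{h}}$. Writing $\Delta(\mathbf{s},\hat{\mathbf{s}}_j)=\|\mathbf{s}-\hat{\mathbf{s}}_j\|^2$ as a single sum over all $K=N(M+1)$ coordinates, I would split that sum into the coordinates belonging to the $\mathbf{y}$-block and those belonging to the $\breve{\mathbf{h}}$-block. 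Because a finite sum is invariant to the order of its terms, the partition is insensitive to the stacking order used in the definition of $\mathbf{s}$, so the apparent mismatch between the ordering in $\mathbf{s}$ and the labeling $(\hat{\mathbf{s}}_j^{(1)},\hat{\mathbf{s}}_j^{(2)})$ is immaterial. Each partial sum is then recognized as the distortion restricted to the corresponding sub-vector, namely $\Delta(\mathbf{y},\hat{\mathbf{s}}_j^{(1)})$ and $\Delta(\breve{\mathbf{h}},\hat{\mathbf{s}}_j^{(2)})$, and adding them recovers \eqref{eq17}.

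There is no genuine technical obstacle here; the result is essentially the Pythagorean additivity $\|[\mathbf{a}^T,\mathbf{b}^T]^T\|^2=\|\mathbf{a}\|^2+\|\mathbf{b}\|^2$ applied to the residual $\mathbf{s}-\hat{\mathbf{s}}_j$. The only point I would flag explicitly is the hidden modeling assumption that $\Delta$ is additively separable across the CSI and signal coordinates: stated as a hypothesis, the proposition is really asserting that this separable (squared-Euclidean) distortion is the right object to analyze, and the ``proof'' amounts to verifying that such a metric splits the joint channel-and-signal quantization loss into an independent channel-quantization term $\Delta(\breve{\mathbf{h}},\hat{\mathbf{s}}_j^{(2)})$ and an independent signal-quantization term $\Delta(\mathbf{y},\hat{\mathbf{s}}_j^{(1)})$. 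Since the subsequent analysis of \gls{mimo}-\gls{vq} will lean on exactly this separation, I would keep the two terms clearly labeled as channel quantization versus message quantization to set up that later discussion.
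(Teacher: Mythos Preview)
Your proposal is correct and mirrors the paper's treatment: the paper does not give a formal proof either, but simply remarks immediately after the statement that ``{\em Proposition 1} is true when we use the squared Euclidean-norm to measure the quantization loss; and in this case, \gls{vq} is least-squares optimal.'' Your argument is just a more detailed unpacking of that same observation (Pythagorean additivity of $\|\cdot\|^2$ over a block partition), together with the correct note that the statement is really a modeling hypothesis about the choice of $\Delta$ rather than a theorem.
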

{\em Proposition 1} is true when we use the squared Euclidean-norm to measure the quantization loss; and in this case, \gls{vq} is least-squares optimal.

After training, the nearest-neighbor rule effectively leads to the forming of $J$ spheres, with each having their center at $\hat{\mathbf{s}}_j$ and radius $d_j$, i.e., 
\begin{equation}\label{eq18}
\Delta(\mathbf{s}, \hat{\mathbf{s}}_j)\leq d_j, \forall j.
\end{equation}
Moreover, we shall have
\begin{IEEEeqnarray}{rr}\label{eq19}
\Delta(\mathbf{y}, \hat{\mathbf{s}}_j^{(1)})\leq d_j^{(1)}&\\
\Delta(\breve{\mathbf{h}}, \hat{\mathbf{s}}_j^{(2)})\leq d_j^{(2)}\label{eq20}&
\end{IEEEeqnarray}
It is possible for $\mathbf{s}$ to fall into more than one sphere as described in \eqref{eq18}. In the \gls{ann} implementation, this problem can be solved through the use of softmax activation function \cite{8761999}. It is therefore reasonable to assume that $\mathbf{s}$ will only fall into one of the spheres. The \gls{mimo} detection will have errors when 
\begin{equation}\label{eq21}
\Delta(\mathbf{s}, \hat{\mathbf{s}}_j)\leq d_j, ~\text{for}~ \mathbf{s}\longrightarrow\mathbf{x}_i\longrightarrow\hat{\mathbf{s}}_i, i\neq j
\end{equation}
In order to prevent this case from happening in the noiseless context,  we shall have the following condition. 
\begin{thm}\label{thm2}
Assuming: a1) $\mathbf{H}_j\mathbf{x}_j=\mathbf{H}_i\mathbf{x}_i$, $\forall i\neq j$, a sufficient condition to prevent \eqref{eq21} from happening in the noiseless context is 
\begin{IEEEeqnarray}{rl}
\Delta(\breve{\mathbf{h}}_i, \hat{\mathbf{s}}_j^{(2)})&\geq d_j^{(2)}\label{eq22}\\
\Delta(\mathbf{H}_i\mathbf{x}_j, \hat{\mathbf{s}}_j^{(1)})&< d_j-d_j^{(2)}\label{eq23}
\end{IEEEeqnarray}
\end{thm}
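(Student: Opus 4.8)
The plan is to prove the contrapositive: assuming the separation conditions \eqref{eq22}--\eqref{eq23}, I would show that the genuine input for $\mathbf{x}_i$ cannot satisfy \eqref{eq21}, i.e.\ that $\Delta(\mathbf{s},\hat{\mathbf{s}}_j)>d_j$ for the erroneous index $j\neq i$. The starting point is the true noiseless input $\mathbf{s}=[\breve{\mathbf{h}}_i^T,(\mathbf{H}_i\mathbf{x}_i)^T]^T$, to which I apply the additive splitting of \emph{Proposition 1}, writing $\Delta(\mathbf{s},\hat{\mathbf{s}}_j)=\Delta(\mathbf{H}_i\mathbf{x}_i,\hat{\mathbf{s}}_j^{(1)})+\Delta(\breve{\mathbf{h}}_i,\hat{\mathbf{s}}_j^{(2)})$. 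The argument then reduces to lower-bounding each of the two summands and showing their sum exceeds the cluster radius $d_j$.

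The role of assumption a1 is to neutralise the received-signal coordinate: because $\mathbf{H}_i\mathbf{x}_i=\mathbf{H}_j\mathbf{x}_j$, the $\mathbf{y}$-part of $\mathbf{s}$ coincides with a genuine member of cluster $j$, so no separation can be obtained from the received signal alone, and all the discriminating power must be carried by the channel coordinate. This is exactly why condition \eqref{eq22} is imposed: it asserts that the true channel vector $\breve{\mathbf{h}}_i$ lies on or outside the channel-sphere of cluster $j$, so that $\Delta(\breve{\mathbf{h}}_i,\hat{\mathbf{s}}_j^{(2)})\geq d_j^{(2)}$. Substituting this bound into the splitting shows that \eqref{eq21} could hold only if the received part were squeezed below the residual budget, i.e.\ $\Delta(\mathbf{H}_i\mathbf{x}_i,\hat{\mathbf{s}}_j^{(1)})\leq d_j-d_j^{(2)}$.

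To close the argument I would contradict this last inequality using condition \eqref{eq23}. That condition pins the received anchor $\hat{\mathbf{s}}_j^{(1)}$ to the \emph{cross} received point $\mathbf{H}_i\mathbf{x}_j$ (signal $\mathbf{x}_j$ seen through the true channel $\mathbf{H}_i$) within the same residual radius $\sqrt{d_j-d_j^{(2)}}$. Since $\mathbf{x}_i\neq\mathbf{x}_j$ and $\mathbf{H}_i$ has full column rank (guaranteed a.s.\ under the i.i.d.\ model with $N\geq M$), the two noiseless received points $\mathbf{H}_i\mathbf{x}_i$ and $\mathbf{H}_i\mathbf{x}_j$ are distinct. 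A reverse-triangle-inequality estimate then gives $\Delta(\mathbf{H}_i\mathbf{x}_i,\hat{\mathbf{s}}_j^{(1)})$ a strictly positive lower bound in terms of $\|\mathbf{H}_i(\mathbf{x}_i-\mathbf{x}_j)\|$ and $\sqrt{d_j-d_j^{(2)}}$; provided the faded constellation points are separated by more than twice the residual radius, this lower bound pushes the received part above $d_j-d_j^{(2)}$, contradicting the squeezed bound and hence ruling out \eqref{eq21}.

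I expect this last step to be the main obstacle. Lower-bounding the channel part by \eqref{eq22} and the bookkeeping of the residual budget $d_j-d_j^{(2)}$ are routine once the component radii $d_j^{(1)},d_j^{(2)}$ in \eqref{eq19}--\eqref{eq20} are tied to $d_j$ (one must also check $d_j\geq d_j^{(2)}$ so the budget is non-negative). The delicate part is the received coordinate, where one has to convert the \emph{upper} bound \eqref{eq23} on the distance to the cross point $\mathbf{H}_i\mathbf{x}_j$ into a \emph{lower} bound on the distance to the true point $\mathbf{H}_i\mathbf{x}_i$; this hinges on separating the two faded lattice points and on the single-sphere (softmax) assumption that lets ``preventing \eqref{eq21}'' certify correct detection. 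If that separation only holds in expectation rather than pointwise, the statement becomes a high-probability one, and I would make the radius-versus-constellation-distance trade-off explicit there.
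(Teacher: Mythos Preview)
Your overall decomposition via \emph{Proposition 1} and the use of \eqref{eq22} to lower-bound the channel coordinate are exactly right, and match the paper. The divergence is in how you treat condition \eqref{eq23}, and that is where your plan runs into a genuine gap.

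You read \eqref{eq23} as a hypothesis to be \emph{consumed}: from the upper bound on $\Delta(\mathbf{H}_i\mathbf{x}_j,\hat{\mathbf{s}}_j^{(1)})$ you try to manufacture a lower bound on $\Delta(\mathbf{H}_i\mathbf{x}_i,\hat{\mathbf{s}}_j^{(1)})$ via a reverse triangle inequality and a separation assumption on the faded constellation $\|\mathbf{H}_i(\mathbf{x}_i-\mathbf{x}_j)\|$. That separation is nowhere in the theorem; it is an extra hypothesis you would be smuggling in, and without it the step simply does not close (an upper bound on the distance to one point never yields a lower bound on the distance to a different point unless you control their mutual distance). You correctly flag this as the main obstacle, but there is no way around it on your route.

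The paper avoids this entirely by reversing the logical role of \eqref{eq23}. Its argument is a derivation rather than a direct sufficiency check: starting from the in-cluster constraint $\Delta(\mathbf{H}_j\mathbf{x}_j,\hat{\mathbf{s}}_j^{(1)})+\Delta(\breve{\mathbf{h}}_j,\hat{\mathbf{s}}_j^{(2)})\leq d_j$ and the desired exclusion $\Delta(\mathbf{H}_i\mathbf{x}_i,\hat{\mathbf{s}}_j^{(1)})+\Delta(\breve{\mathbf{h}}_i,\hat{\mathbf{s}}_j^{(2)})>d_j$ (which, under a1, share the same first term), one obtains the chain $\Delta(\breve{\mathbf{h}}_j,\hat{\mathbf{s}}_j^{(2)})\leq d_j-\Delta(\mathbf{H}_j\mathbf{x}_j,\hat{\mathbf{s}}_j^{(1)})<\Delta(\breve{\mathbf{h}}_i,\hat{\mathbf{s}}_j^{(2)})$, and tightening the left side over all admissible $\breve{\mathbf{h}}_j$ via \eqref{eq20} gives \eqref{eq22}. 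Condition \eqref{eq23} is then obtained not by a triangle-inequality trick but by the observation that $\mathbf{H}_i$ is itself an admissible channel realization for cluster $j$; plugging $\mathbf{H}_i$ into the in-cluster constraint produces $\Delta(\mathbf{H}_i\mathbf{x}_j,\hat{\mathbf{s}}_j^{(1)})+\Delta(\breve{\mathbf{h}}_i,\hat{\mathbf{s}}_j^{(2)})\leq d_j$, and combining this with the previous chain and with \eqref{eq20} yields \eqref{eq23} directly. In other words, \eqref{eq23} in the paper is a \emph{derived} bound on the signal part, not an input used to force the exclusion; the exclusion itself is carried by the channel coordinate through \eqref{eq22}. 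Your plan to leverage \eqref{eq23} in the opposite direction is what creates the need for the extraneous constellation-separation assumption.
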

\begin{proof}
In the noiseless case, given the channel realization $\mathbf{H}_j$, \eqref{eq17} can be written into
\begin{equation}\label{eq24}
\Delta(\mathbf{s}_j, \hat{\mathbf{s}}_j)=\Delta(\mathbf{H}_j\mathbf{x}_j, \hat{\mathbf{s}}_j^{(1)})+\Delta(\breve{\mathbf{h}}_j, \hat{\mathbf{s}}_j^{(2)})\leq d_j
\end{equation}
Considering $\mathbf{s}_i\longrightarrow\mathbf{x}_i$ with the channel realization $\mathbf{H}_i$, we have the quantization loss
\begin{equation}\label{eq25}
\Delta(\mathbf{s}_i, \hat{\mathbf{s}}_j)=\Delta(\mathbf{H}_i\mathbf{x}_i, \hat{\mathbf{s}}_j^{(1)})+\Delta(\breve{\mathbf{h}}_i, \hat{\mathbf{s}}_j^{(2)})> d_j
\end{equation}
Given the assumption a1), \eqref{eq25} becomes
\begin{equation}\label{eq26}
\Delta(\mathbf{s}_i, \hat{\mathbf{s}}_j)=\Delta(\mathbf{H}_j\mathbf{x}_j, \hat{\mathbf{s}}_j^{(1)})+\Delta(\breve{\mathbf{h}}_i, \hat{\mathbf{s}}_j^{(2)})> d_j
\end{equation}
To simultaneously fulfill \eqref{eq24} and \eqref{eq26} leads to
\begin{equation}\label{eq27}
\Delta(\breve{\mathbf{h}}_j, \hat{\mathbf{s}}_j^{(2)})\leq d_j-\Delta(\mathbf{H}_j\mathbf{x}_j, \hat{\mathbf{s}}_j^{(1)})<\Delta(\breve{\mathbf{h}}_i, \hat{\mathbf{s}}_j^{(2)}), \forall i\neq j
\end{equation}
\eqref{eq27} should hold for all possible realizations of $\mathbf{H}_j$ (or $\breve{\mathbf{h}}_j$). This condition can be guaranteed by replacing $\Delta(\breve{\mathbf{h}}_j, \hat{\mathbf{s}}_j^{(2)})$ with the upper bound \eqref{eq20}; and such leads to \eqref{eq22}. 

Furthermore, $\mathbf{H}_i$ can also be a possible realization of $\mathbf{H}_j$. In this case, we have
\begin{equation}\label{eq28}
\Delta(\mathbf{s}_j, \hat{\mathbf{s}}_j)=\Delta(\mathbf{H}_i\mathbf{x}_j, \hat{\mathbf{s}}_j^{(1)})+\Delta(\breve{\mathbf{h}}_i, \hat{\mathbf{s}}_j^{(2)})\leq d_j
\end{equation}
Applying the second inequality of \eqref{eq27} into \eqref{eq28} yeilds
\begin{IEEEeqnarray}{rl}\label{eq29}
\Delta(\mathbf{H}_i\mathbf{x}_j, \hat{\mathbf{s}}_j^{(1)})&<\Delta(\mathbf{H}_j\mathbf{x}_j, \hat{\mathbf{s}}_j^{(1)})\\
&<d_j-\Delta(\breve{\mathbf{h}}_j, \hat{\mathbf{s}}_j^{(2)})\label{eq30}
\end{IEEEeqnarray}
To guarantee the inequality \eqref{eq30}, $\forall j$, we shall replace $\Delta(\breve{\mathbf{h}}_j, \hat{\mathbf{s}}_j^{(2)})$ with their maximum $d_j^{(2)}$. The result \eqref{eq23} is therefore proved.
\end{proof}

{\color{sblue} {\em Theorem 2} indicates that \gls{mimo}-\gls{vq} actually consists of two parts of quantization, i.e., one for the \gls{csi} quantization, and the other for the received signal quantization. For the \gls{csi} quantization, each neuron statistically partitions \gls{csi} realizations into different groups according to the rule $\Delta(\breve{\mathbf{h}}, \hat{\mathbf{s}}_j^{(2)})\gtreqless d_j^{(2)}$, where the threshold $d_j^{(2)}$ is determined through supervised learning. Theoretically, the use of $\bar{J}$ neurons can result in $2^{\bar{J}}$ different \gls{csi} groups as a maximum. However, a channel realization fulfilling $\Delta(\breve{\mathbf{h}}, \hat{\mathbf{s}}_j^{(2)})\geq d_j^{(2)}$ does not normally fulfill $\Delta(\breve{\mathbf{h}}, \hat{\mathbf{s}}_i^{(2)})\geq d_i^{(2)}, \forall i\neq j$; and thus,  \gls{csi} realizations will only be partitioned into $\bar{J}$ groups, with each corresponding to a state for the transmitted signal partition. Thus, CSI quantization loss is inevitable when the number of channel realization sample is larger than the number of neurons on the output layer. As a result, an error floor is expected to occur at high \gls{snr} regime because of the learning inefficiency, i.e., channel ambiguity. More intuitively, the \gls{mimo}-\gls{vq} principle is illustrated in \figref{fig2}, 
\begin{figure}[t]
\begin{center}
\includegraphics[width=0.8\columnwidth]{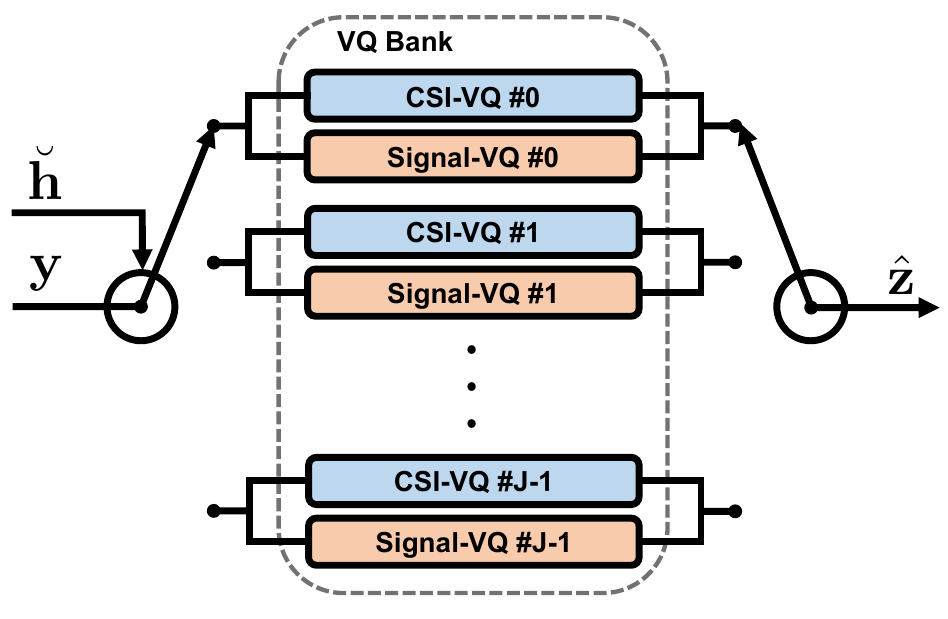}
\caption{Illustration of the \gls{mimo}-\gls{vq} principle. }\label{fig2}
\end{center}
\end{figure}
where the channel realization, $\breve{\mathbf{h}}$, together with received signal, $\mathbf{y}$, are mapped onto one out of the $\bar{J}$ states. It is worth noting that \gls{csi} quantization aims to remove the channel ambiguity defined in {\em Definition 2}. Meanwhile, the signal \gls{vq} performs a mapping $\mathbf{y}\longrightarrow \hat{\mathbf{z}}$, where $\mathbf{z}$ is the supervisory training target as described in \eqref{eq07}.

The major bottleneck here is again the {\em Curse of Dimensionality} problem \cite{Goodfellow-et-al-2016}, as CSI realizations are classified into $\bar{J}$ states which grow exponentially with the number of data streams $M$. When {\color{sblue}either} \gls{clnn} or \gls{clknn} (see Section II-C) is employed for the \gls{vq}, $\bar{J}$ grows linearly with $M$. However, such reduces the resolution of channel quantization and consequently increases the channel ambiguity in signal detection.}

\begin{figure*}[!t]
\centering
\includegraphics[width=7in]{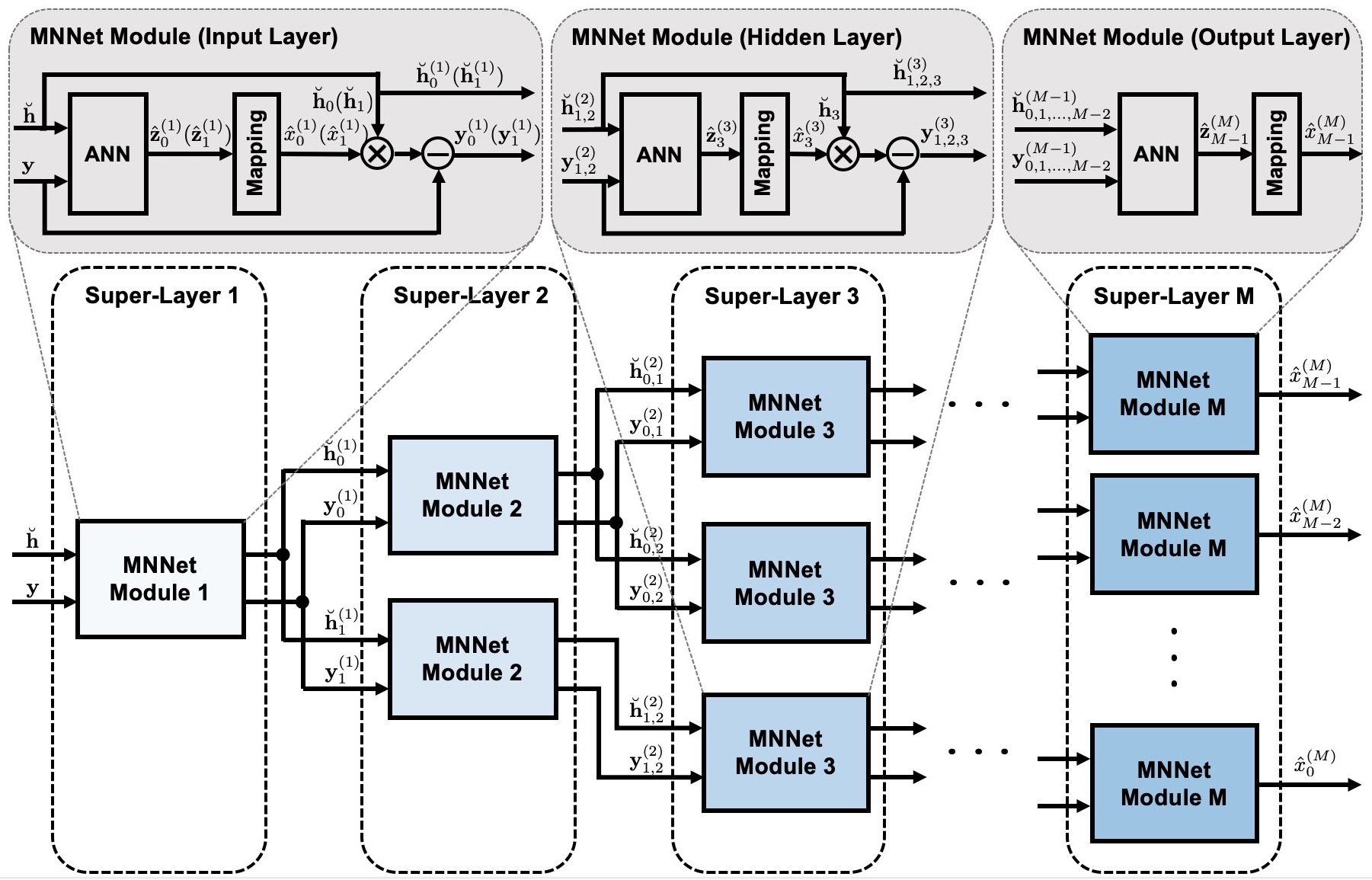}
\caption{Block diagram of the proposed MNNet.}
\label{fig3}
\end{figure*}

\subsection{Channel-Equalized Vector Quantization}\label{sec3c}
Section III-B shows that channel quantization scales down the \gls{ann}-assisted \gls{mimo} detectability. This straightforwardly motivates the channel-equalized \gls{vq} model as depicted in \figref{fig1}-(b). {\color{sblue} When channel equalizers ({\color{sblue}such as} \gls{zf} and \gls{lmmse}) are {\color{sblue}employed,  \gls{mimo}-\gls{vq} simply serves as a symbol de-mapper in Gaussian noise; and thus, there is no channel quantization loss. On the other hand, channel equalizers often require channel matrix inversion, which is of cubic complexity and not ready for parallel computing.
More critically,} the \gls{ann}-assisted receiver {\color{sblue}cannot} achieve the \gls{mlsd} performance as shown in \cite{8761999}, and there is {\color{sblue}lacking convincing} advantages of using machine learning for the \gls{mimo} detection. The only exceptional case is to use the \gls{mf} for the channel equalization, which supports parallel computing and low complexity. It is shown in \cite{8761999} that \gls{mimo}-\gls{vq} is able to improve the \gls{mf} detection performance by exploiting the sequence-detection gain. This is because the \gls{mf} channel equalizer does not remove the antenna correlation as much as the \gls{zf} and \gls{lmmse} algorithms, and leaves room for the \gls{ann} to exploit the residual sequence detection gain. On the other hand, the \gls{mf}-\gls{ann} approach has its performance far away from the \gls{mlsd} approach. Detailed computer simulation is provided in Section V.}

\section{{\color{sblue}The MNNet Approach for \gls{mimo}-\gls{vq}}}
\label{sec4}
{\color{sblue}
In this section, we introduce a novel \gls{mnn}-based deep learning approach, named MNNet, which combines modular neural network \cite{Happel94thedesign} with \gls{pic} algorithm \cite{563486}.

\subsection{MNNet Architecture}\label{sec4a}
The structure of the proposed MNNet is illustrated in \figref{fig3}. The entire network consists of $M$ cascade super-layers, and each layer has a similar structure that contains a group of MNNet modules instead of neurons. The MNNet modules on the same super-layer are identical, and those on different super-layers function differently. Besides, the MNNet modules on the front $M-1$ super-layers share similar structures, which consist of three basic components: An \gls{ann}, a codebook mapping function and an interference cancellation unit. The $M^\mathrm{th}$ super-layer does not have the interference cancellation unit because the output of its mapping function is the information-bearing symbols.

The input to the first super-layer is a concatenation of the \gls{csi} vector $\breve{\mathbf{h}}$ and the received signal vector $\mathbf{y}$. It is perhaps worth noting that communication signals are normally considered as complex-valued symbols, but most of the existing deep learning algorithms are based on real-valued operations. To facilitate the learning and communication procedure,  it is usual practice to convert complex signals to their real signal equivalent version using \eqref{eq31} (see \cite{8642915,DBLP:journals/corr/abs-1812-10044,8646357,8054694,Carlos18}). For instance, a $(K) \times (1)$ complex-valued input vector $\mathbf{s}$ is converted into a $(2K) \times (1)$ real-valued vector $\mathbf{s}_\text{real}$ by concatenating its real and imaginary parts, which is given by
\begin{equation}\label{eq31}
\mathbf{s}_\text{real} = \begin{bmatrix}
~\Re(\mathbf{s})~\\ 
~\Im(\mathbf{s})~
\end{bmatrix}
\end{equation}
The output of \gls{ann} is the estimate of entire codebook vector $\hat{\mathbf{z}}^{(1)}$ as we introduced in Section \ref{sec2b}\&C, where the superscript $[\cdot]^{(1)}$ stands for the number of super-layer. In order to simplify the network structure, only the first two transmitted symbols $\hat{x}_0^{(1)}(\hat{x}_1^{(1)})$ are mapped back from the estimated codebook vector $\hat{\mathbf{z}}_0^{(1)}(\hat{\mathbf{z}}_1^{(1)})$ by applying the pre-defined bijection mapping functions (e.g. \gls{nn}, \gls{clnn} or \gls{clknn}). Denote $\breve{\mathbf{h}}_m$ to the channel coefficients between the $m^\mathrm{th}$ transmit antenna to the \gls{mimo} receiver, the interference cancellation procedure can be described as 
\begin{equation}\label{eq32}
\mathbf{y}_m^{(1)}=\mathbf{y}-\breve{\mathbf{h}}_m\hat{x}_m^{(1)},~_{m\in\left\{0,1\right\}}
\end{equation}
which, together with part of the \gls{csi} vector $\breve{\mathbf{h}}_m^{(1)}$ serves as the input to the second super-layer, i.e., $\mathbf{s}_m^{(1)}=[\mathbf{y}_m^{(1)T},\breve{\mathbf{h}}_m^{(1)T}]^T,~_{m\in\left\{0,1\right\}}$, where $\breve{\mathbf{h}}_m^{(1)}$ is obtained by removing $\breve{\mathbf{h}}_m$ from $\breve{\mathbf{h}}$. Therefore, the second super-layer will consist of two parallel MNNet modules where the outputs from the previous super-layer are processed separately. In order to obtain an estimate of every transmitted symbol $\hat{x}_m,~_{0\leq m\leq M-1}$, the top branch on each super-layer needs to produce two outputs as depicted in \figref{fig3}. Furthermore, the MNNet modules on the front $M-1$ super-layers share similar structure except for the size of \gls{ann} input and output. Specifically, the size of input to the $m^\mathrm{th}$ super-layer is $2N(M-m+2)$, and output is $2^{(M-m+1)\cdot \log_2L}$ when \gls{nn} mapping rule is considered. By repeating this process for $M$ times, the output of the $M^\mathrm{th}$ super-layer are the final estimates of transmitted symbols.

\subsection{Scalability and Computational Complexity}
\label{sec4b}
The information-theoretical result in \eqref{eq15} and \eqref{eq16} implies that \gls{ann}-assisted \gls{mimo}-\gls{vq} introduces the rate distortion, which grows at least linearly with the number of transmit antennas $M$. The proposed MNNet scales up the \gls{mimo}-\gls{vq} solution by reducing the number of data streams on a layer-by-layer basis. The upper bound of the compression rate increases along the feed-forward propagation. On the other hand, MNNet involves a large number of modules. 
\begin{prop}\label{prop2}
Concerning every super layer to be associated with the PIC algorithm, the MNNet involves in total $M(M+1)/2$ modules to conduct the \gls{mimo} signal detection.
\end{prop}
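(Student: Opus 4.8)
The plan is to reduce the statement to a per-super-layer module count and then sum an arithmetic series. Concretely, I would prove by induction on $k$ that the $k$-th super-layer of MNNet contains exactly $k$ modules, for $k=1,\dots,M$, so that the total number of modules is $\sum_{k=1}^{M}k=M(M+1)/2$.

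First I would fix the base cases directly from the architecture of Section \ref{sec4a}: the first super-layer receives the single concatenated input $\mathbf{s}=[\breve{\mathbf{h}}^T,\mathbf{y}^T]^T$ and hence consists of one module, while the text already records that the second super-layer consists of two parallel modules. The crux is then the inductive branching step. The key structural observation is that each super-layer performs a single round of \gls{pic}, and that the number of modules on the next super-layer equals the number of residual signals fed forward through \eqref{eq32}, i.e. the number of cancellation branches emanating from the current super-layer. By the architecture, the top module of every super-layer produces two symbol estimates (the ``two outputs'' of the top branch in \figref{fig3}), so it cancels two distinct candidate symbols and thus emits two branches; every other module of that super-layer produces a single estimate and hence emits exactly one branch. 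Therefore, if the $k$-th super-layer has $k$ modules, the next super-layer receives $2+(k-1)=k+1$ residual signals and so comprises $k+1$ modules, which closes the induction.

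With the per-layer counts $1,2,\dots,M$ established, I would finish by summing: the total module count is $\sum_{k=1}^{M}k = M(M+1)/2$. Here I would note that the recursion terminates correctly at the $M$-th super-layer, which carries no interference-cancellation unit (its mapping output is already the information-bearing symbols) and therefore spawns no further branches while still contributing its $M$ modules to the total.

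The main obstacle I anticipate is not the arithmetic but justifying the branching rule rigorously from the architecture: one must argue that \emph{only} the top branch bifurcates while all remaining branches stay single-output. This is precisely the step that converts what would otherwise be a binary tree with $2^{M}-1$ modules into linear per-layer growth, so the whole proposition hinges on it. In passing I would also want to verify consistency of symbol coverage, i.e. that this particular top-branch-only bifurcation pattern still yields an estimate of every $\hat{x}_m$ for $0\le m\le M-1$ at the output of the $M$-th super-layer; otherwise the count of $k$ modules on layer $k$ could not simultaneously serve the detection of all $M$ streams.
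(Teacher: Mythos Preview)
Your proposal is correct, and the arithmetic endpoint---$m$ modules on the $m^{\mathrm{th}}$ super-layer, summed to $M(M+1)/2$---coincides with the paper's. The route, however, is genuinely different. You argue \emph{constructively} from the branching structure described in Section~\ref{sec4a}: one module on layer~1, and then the rule ``top branch emits two residual signals, every other branch emits one'' yields the recursion $k\mapsto k+1$, which you close by induction. The paper instead argues \emph{combinatorially} from the \gls{pic} principle: on the $m^{\mathrm{th}}$ super-layer a full \gls{pic} would require $P(M,m-1)$ modules (one for every ordered choice of $m-1$ already-cancelled symbols), but the i.i.d.\ equal-probability assumption on $\mathbf{x}$ makes all cancellation orders statistically equivalent, so the tree can be collapsed to $m$ modules per layer. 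Your approach has the advantage of tracking the actual network in \figref{fig3} module-by-module and making explicit why the growth is linear rather than exponential; the paper's approach has the advantage of explaining \emph{why} the architecture is entitled to keep only the top-branch bifurcation in the first place---namely the symmetry that you flag as an ``obstacle'' at the end of your proposal. In that sense the two arguments are complementary: the paper supplies exactly the symmetry justification you identify as the missing ingredient, while your induction makes the per-layer count more transparent than the paper's somewhat abrupt passage from $P(M,m-1)$ to $\sum_{m=1}^{M} m$.
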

\begin{proof}
As shown in \figref{fig3}, there are $M$ super layers forming the MNNet. On each $m^\mathrm{th}$ super layer ($m=1,...,M$),  MNNet module eliminates $(m-1)$ interferences, and then perform the \gls{vq} procedure. According to the \gls{pic} principle, interferences eliminated at MNNet modules have to be different and in order. Hence, the $m^\mathrm{th}$ super layer needs $P(M, m-1)$ modules, where $P(,)$ denotes the permutation. However, since every transmitted symbol are independently drawn from a finite-alphabet set with equal probability. It is trivial to justify that the average error probability of each reconstructed transmitted symbol is equal. Therefore, the total number of modules employed in MNNet can calculated by 

\begin{IEEEeqnarray}{ll}
\mathcal{Q}(M)&= \sum_{m=1}^{M}m \nonumber \\
&= M(M+1)/2
\end{IEEEeqnarray}
Proposition 2 is therefore proved.
\end{proof}

Since the MNNet is well modularized, the learning procedure does not need to be applied on the entire network as a whole, but rather at the modular level. Such a strategy largely improves the computational efficiency at the ANN training stage. The computational complexity required for MNNet training per module per iteration is approximately $\mathcal{O}(bNM)$, where $b$ is the size of mini-batch; and $\mathcal{O}(NM^2)$ per detection. To put this in perspective, DetNet has a complexity of $\mathcal{O}(dN^2)$, where $d$ is the number of iterative layers. OAMPNet has a higher complexity of $\mathcal{O}(dN^3)$ dominated by the matrix inversion. The \gls{lmmse} algorithm similarly requires a matrix inversion, resulting in a complexity of $\mathcal{O}(N^3)$.

}

\section{{\color{sblue}Simulation} Results and Analysis}
\label{sec5}
{\color{sblue} This section presents the experimental results and related analysis. The data sets and experimental setting are introduced at the beginning, followed by a brief introduction of the competing algorithms. After that, a comprehensive performance evaluation is given, which demonstrates the performance of the proposed MNNet approach.}

\subsection{Data Sets and Experimental Setting}\label{sec5a}
In traditional deep learning applications such as image processing and speech recognition, the performance of different algorithms and models are evaluated under common benchmarks and open data sets (e.g. MNIST, LSUN) \cite{888}. It is a different story in wireless communication domain since we are dealing with artificially manufactured data that can be accurately generated. Therefore, we would like to define the data generation routines instead of giving specific data sets.

{\color{sblue} As far as the supervised learning is concerned, every training sample is a mainstay containing an input object and a supervisory output. According to the system model in \eqref{eq01}, the transmitted signal $\mathbf{x}$ is an $(M)\times(1)$ vector with each symbol drawn from a finite-alphabet set $\matc{A}$ consisting of $L$ elements. There are a total number of $J=L^M$ possible combinations, and $\mathbf{x}_j$ denotes the $j^\text{th}$ combination. By applying the \gls{nn} rule, the corresponding supervisory representation of $\mathbf{x}_j$ can be expressed by
\begin{equation}\label{eq34}
\mathbf{z}_j=[\mathbf{1}(\mathbf{x}_j=\mathbf{x}_0),\mathbf{1}(\mathbf{x}_j=\mathbf{x}_1),\cdots,\mathbf{1}(\mathbf{x}_j=\mathbf{x}_{J-1}) ]^T
\end{equation}
where $\mathbf{1}(\cdot)$ is the indicator function. It is easy to find that $\mathbf{z}_j$ is a $(J)\times(1)$ one-hot vector with only one element equals to one and others are zero. Meanwhile, other codeword mapping approaches including \gls{clnn} and \gls{clknn} can be implemented through similar method as described in \eqref{eq08} or \eqref{eq09}. Due to the channel randomness, each transmitted signal can yield to multiple received signals even in the noiseless case. Denote $\matc{B}_j$ to the feasible set which contains all possible received signal vectors when $\mathbf{x}_j$ is transmitted, $\mathbf{y}_{i,j}\in \matc{B}_j$ to the $i^\text{th}$ element of $\matc{B}$ and $\breve{\mathbf{h}}_{i,j}$ to the corresponding channel realization. Then, the pairwise training sample can be described as
\begin{equation}\label{eq35}
\left \{ [\breve{\mathbf{h}}_{i,j}^T,\mathbf{y}_{i,j}^T]^T,\mathbf{z}_j \right \}
\end{equation}
and the goal of neural network training is to minimize the following objective function
\begin{equation}\label{eq36}
\varphi ^*= \underset{\varphi}{\arg \min} ~\mathcal{L}\left (\mathbf{z}_j,\hat{\mathbf{z}}_j  \right )
\end{equation}
by adjusting the trainable parameters $\varphi = \left\{\mathbf{W},\mathbf{b} \right\}$, where $\mathcal{L}(\cdot)$ denotes to loss function, $\mathbf{W}$ and $\mathbf{b}$ to the weight and bias, and $\hat{\mathbf{z}}_j$ to the \gls{ann} reconstructed vector. The most popular method for updating parameters $\varphi$ is backpropagation together with stochastic gradient descent \cite{Rumelhart:1995:BBT:201784.201785}, which start with some randomly initialized values and iteratively converge to an optimum point. Furthermore, there are many adaptive learning algorithms which dramatically improve the convergence performance in neural network training procedure \cite{Riedmiller1994AdvancedSL}. In this paper, AdaBound algorithm (see \cite{luo2018adaptive} for detailed description) is applied for performance evaluation

All the experiments are run on a Dell PowerEdge R730 2x 8-Core E5-2667v4 Server, and implemented in MATLAB.

\subsection{{\color{sblue}Baseline Algorithms for Performance Comparison} }\label{sec5b}
In our experiments, the following algorithms are employed for performance comparison:
\begin{itemize}
\item \textbf{ZF:} Linear detector that applies the channel pseudo-inverse to restore the signal \cite{5534810}.
\item \textbf{LMMSE:} Linear detector that applies the SNR-regularized channel pseudo-inverse to restore the signal \cite{DBLP:journals/corr/abs-1102-1462}.
\item \textbf{MF:} Linear detector that has the lowest computational complexity among all \gls{mimo} detectors \cite{8254935}.
\item \textbf{MLSD:} The optimum detection algorithm that requires an exhaustive search.
\item \textbf{DetNet:} A deep learning approach introduced in \cite{8642915}, which applies a deep unfolding approach that transforms a computationally intractable probabilistic model into a set of cascaded deep neural networks.
\item \textbf{OAMPNet:} A deep learning approach introduced in \cite{8646357}, which incorporates deep learning into the orthogonal AMP algorithm.
\item \textbf{IW-SOAV:} An iterative detection algorithm for massive overloaded \gls{mimo} system based on iterative weighted sum-of-absolute value optimization \cite{7760475}.
\item \textbf{TPG-Detector:}  A deep learning approach introduced in \cite{DBLP:journals/corr/abs-1812-10044}, which is designed specifically for massive overloaded \gls{mimo} system.
\end{itemize}

\subsection{Simulations and Performance Evaluation}\label{sec5c}
Our computer simulations are structured into five experiments. In experiment 1, we investigate the performance of different codebook combination approaches. In experiment 2, the impact of the size of channel realization set on the detection performance of \gls{mimo}-\gls{vq} model is studied. In experiment 3 and 4, we evaluate the detection performance of both channel-equalized \gls{vq} model and \gls{mimo}-\gls{vq} model. In experiment 5, a comprehensive performance evaluation is provided for the proposed MNNet algorithm. The key metric utilized for performance evaluation is the average \gls{ber} over sufficient Monte-Carlo trails of block Rayleigh fading channels. The \gls{snr} is defined as the average received information bit-energy to noise ratio per receive antenna (i.e. $E_b/N_0$).

\begin{figure}[t!]
\centering
\includegraphics[width=3.5in]{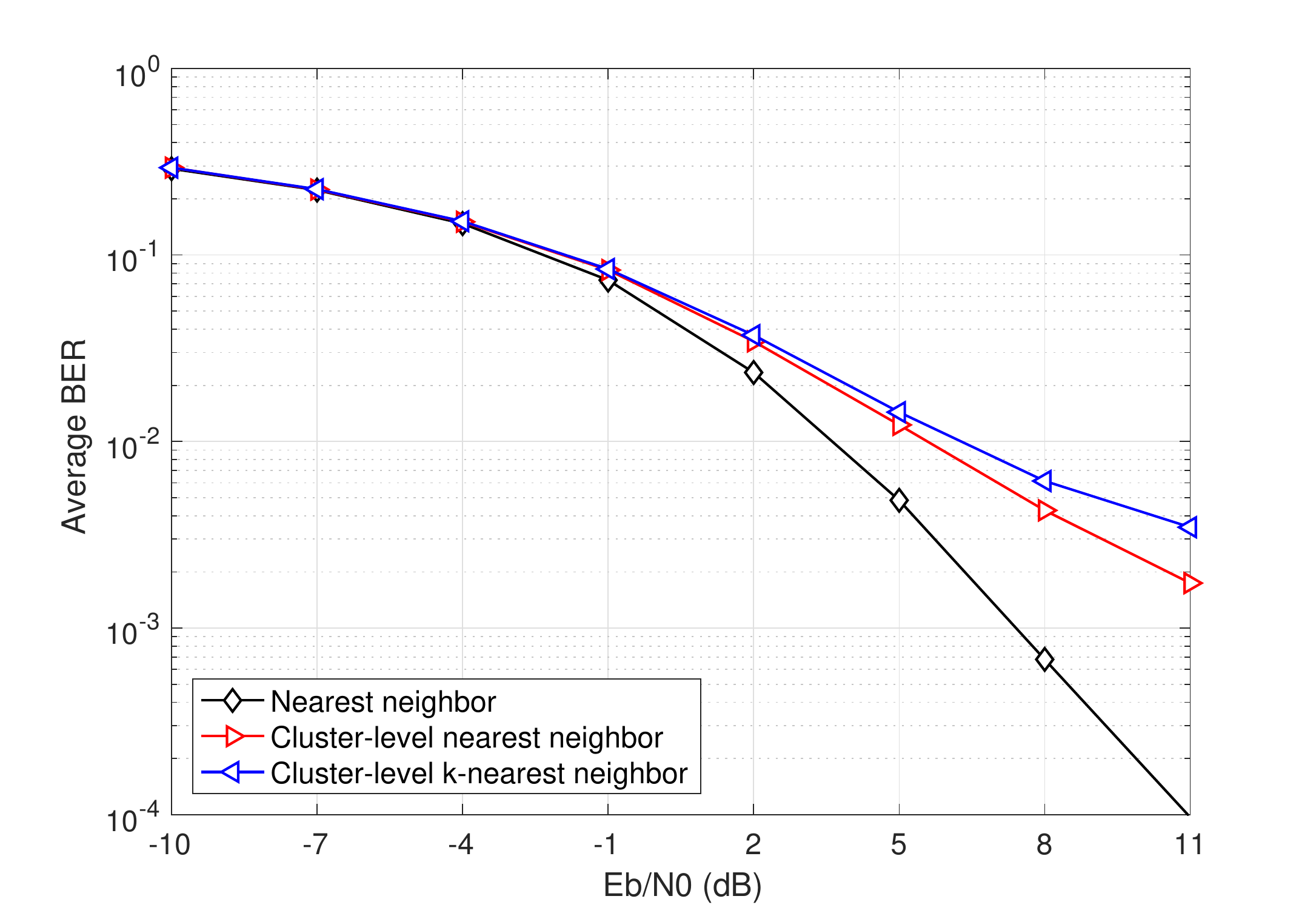}
\caption{\gls{ber} as a function of $E_b/N_0$ for \gls{mimo}-\gls{vq} model with different codebook combination approaches including \gls{nn}, \gls{clnn} and \gls{clknn} in uncoded 4-by-8 \gls{mimo} system with QPSK modulation.}
\label{fig4}
\end{figure}

\begin{table}
\center
\caption{Layout of the \gls{ann}}\label{T1}
\renewcommand{\arraystretch}{1}
\begin{tabular}{l|l}
\hline
\textbf{Layer} &\textbf{Output dimension} \\ \hline
Input  & $2K$ \\ \hline
Dense + ReLU & $1024$   \\ \hline
Dense + ReLU & $512$   \\ \hline
Dense + ReLU & $256$   \\ \hline
\begin{tabular}[c]{@{}l@{}}1) \gls{nn}: Dense + softmax\\ 2) \gls{clnn}: Dense + cluster-level softmax\\ 3) \gls{clknn}: Dense + sigmoid\end{tabular} & \begin{tabular}[c]{@{}l@{}}1) $L^M$\\ 2) $ML$\\ 3) $M\log_2L$\end{tabular} \\ \hline
\end{tabular}
\end{table}

\subsubsection*{Experiment 1} The aim of this experiment is to investigate the performance of different codebook combination approaches including \gls{nn}, \gls{clnn} and \gls{clknn}. The system environment is set to be an uncoded 4-by-8 \gls{mimo} network with QPSK modulation scheme. Moreover, Table \ref{T1} provides a detailed layout of \gls{ann} that applied for signal detection. The \gls{ann} training is operated at $E_b/N_0=5$ dB with a mini-batch size of 500; as the above configurations are found to provide the best detection performance.

\figref{fig4} illustrates the average \gls{ber} performance of the \gls{mimo}-\gls{vq} model with different codebook combination approaches. It is observed that \gls{nn} significantly outperforms the other two approaches throughout the whole \gls{snr} range. For \gls{clnn}, the detection performance decreased approximately 4.7 dB at \gls{ber} of $10^{-3}$; and for \gls{clknn}, the gap is around 5 dB. The reason of performance degradation is because of the channel learning inefficiency, i.e., channel ambiguity. Moreover, \gls{clnn} outperforms \gls{clknn} at high \gls{snr} range thanks to the use of more neurons on the output layer of \gls{ann} (i.e. $ML>M\log_2L$). The above phenomenons coincide with our conclusions made in Section \ref{sec2b}\&C that \gls{nn} offers the best performance among all codebook combination approaches. Both \gls{clnn} and \gls{clknn} make a trade-off between computational complexity and performance. In order to demonstrate the best performance, \gls{nn} is applied in the following experiments.

\subsubsection*{Experiment 2} The aim of this experiment is to investigate the impact of the size of channel realization set on the detection performance of \gls{mimo}-\gls{vq} model. The system environment and layout of \gls{ann} remain unchanged as we introduced in Experiment 1. Besides, the \gls{ann} is trained at $E_b/N_0=8$ dB with a mini-batch size of 500.

\begin{figure}[t!]
\centering
\includegraphics[width=3.5in]{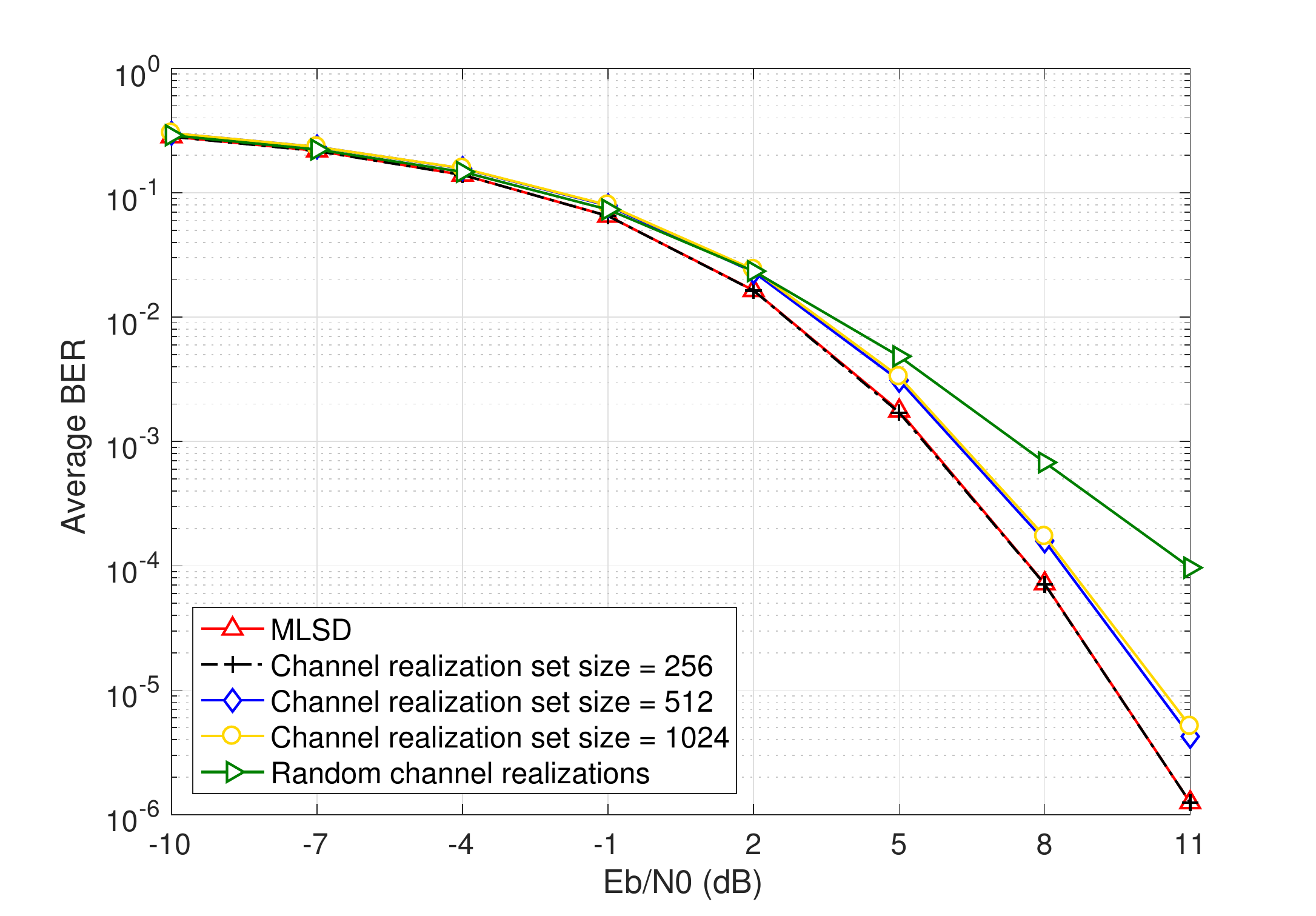}
\caption{\gls{ber} as a function of $E_b/N_0$ for \gls{mimo}-\gls{vq} model with different sizes of channel realization set in uncoded 4-by-8 \gls{mimo} system with QPSK modulation.}
\label{fig5}
\end{figure}

\figref{fig5} illustrates the average \gls{ber} performance of the \gls{mimo}-\gls{vq} model with different sizes of channel realization set. The channel data set is obtained by randomly generate a certain number of unique channel matrices and is utilized for both \gls{ann} training and testing phases. Since the environment is set to be an uncoded 4-by-8 \gls{mimo} with QPSK modulation, the number of neurons on \gls{ann} output layer is 256 (i.e. $\bar{J}=4^4$ as \gls{nn} is considered). Therefore, we consider 4 different cases with the size of the channel realization set varies from 256 to infinite \footnote{Here infinite set stands for the case that \gls{mimo} channel matrix is randomly generated in each training and testing iteration.}. The baseline for performance comparison is the optimum \gls{mlsd}. It is observed that \gls{ann}-assisted \gls{mimo} receiver achieves optimum performance when the size of the set is 256, as the size of channel realizations matches the number of neurons on \gls{ann} output layer. By increasing the size to 512, the detection accuracy slightly degraded around 1 dB at high \gls{snr} range. Further increasing the size of channel realizations to 1024 does not make significant performance difference, as the gap between 512 and 1024 is almost negligible. The situation becomes different when the channel data set is not pre-defined (i.e. randomly generated every time). The average \gls{ber} performance degraded about 2.5 dB at \gls{ber} of $10^{-4}$. The above phenomenons coincide with our conclusion in Section \ref{sec3b} that channel quantization level bounds the signal quantization performance. When the size of the channel data set increases, the channel learning inefficient (i.e. channel ambiguity) causes the loss of detection accuracy particularly at high \gls{snr} range.

\subsubsection*{Experiment 3} This aim of this experiment is to demonstrate the performance of channel-equalized \gls{vq} model. Three commonly used channel equalizers are considered (e.g. \gls{zf}, \gls{lmmse} and \gls{mf}). The layout of \gls{ann} is slight different since the input to \gls{ann} becomes the channel equalized signal block with a dimension of input equals to $2\times M$. Meanwhile, the output of \gls{ann} remains unchanged as introduced in Table I. Besides, the \gls{mimo} channel is considered as slow fading which varies every three transmission slots. The \gls{ann} is trained at $E_b/N_0=10$ dB with a mini-batch size of 500.

\begin{figure}[t!]
\centering
\includegraphics[width=3.5in]{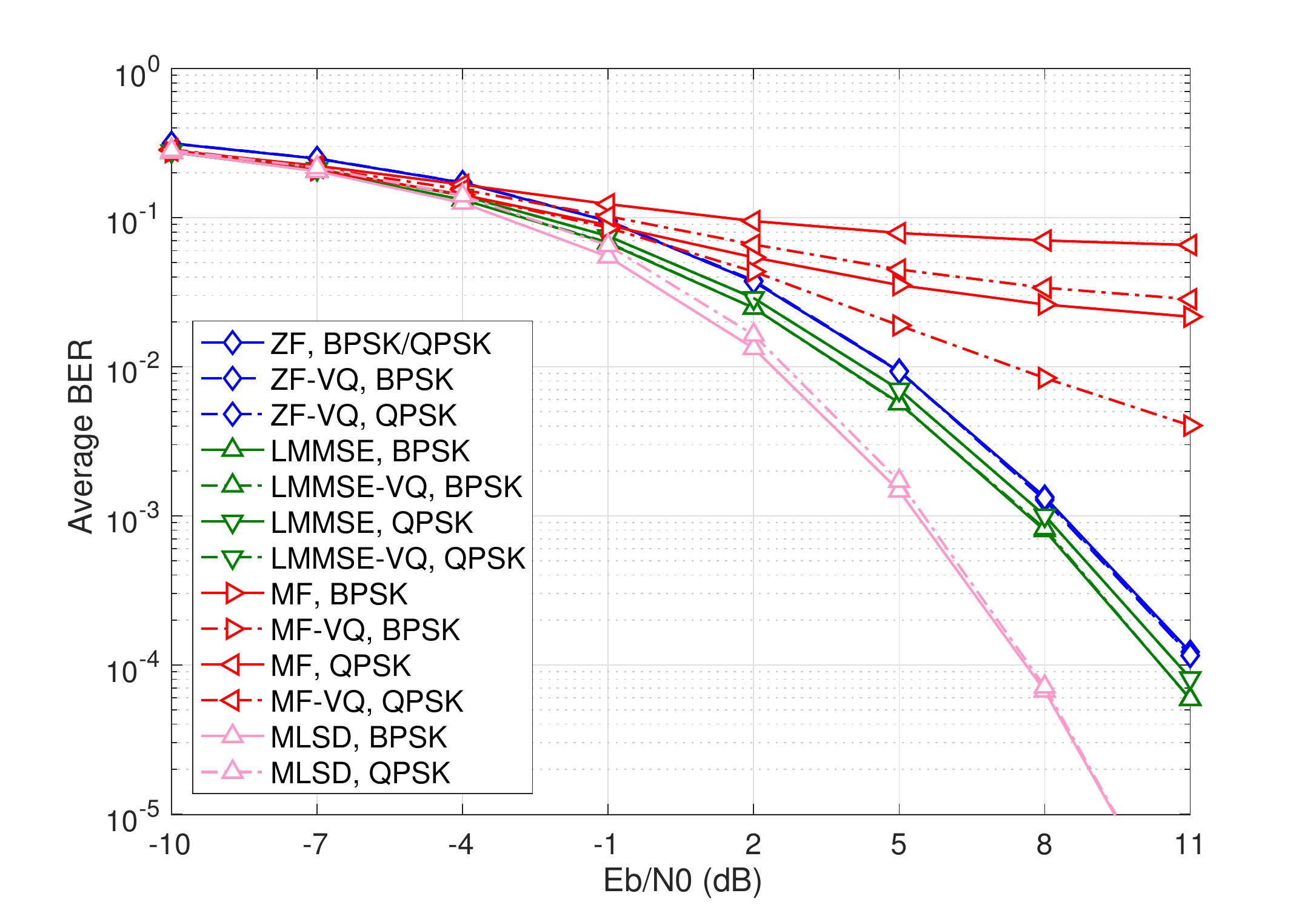}
\caption{\gls{ber} as a function of $E_b/N_0$ for channel-equalized \gls{vq} model in uncoded 4-by-8 MIMO system with BPSK/QPSK modulation.}
\label{fig6}
\end{figure}

\figref{fig6} illustrates the average \gls{ber} performance of channel-equalized \gls{vq} model in uncoded 4-by-8 \gls{mimo} system with both BPSK and QPSK modulation schemes. The baseline for performance comparison is the conventional channel equalizers.  It is observed that \gls{ann} significantly improves the detection performance of \gls{mf} equalizer at high \gls{snr} range. The deep learning gain is around 7 dB for BPSK and 10 dB for QPSK. This phenomenon indicates our conclusion in Section \ref{sec3c} that \gls{ann} is able to improve \gls{mf} equalizer by exploiting more sequence detection gain. However, for both \gls{zf} and \gls{lmmse} equalizers, \gls{ann} does not offer any performance improvement, as the \gls{ann} serves as a simple \gls{awgn} de-mapper. Moreover, all of these approaches have their performances far away from the optimum \gls{mlsd} in both BPSK and QPSK cases.

\begin{figure}[t!]
\centering
\includegraphics[width=3.5in]{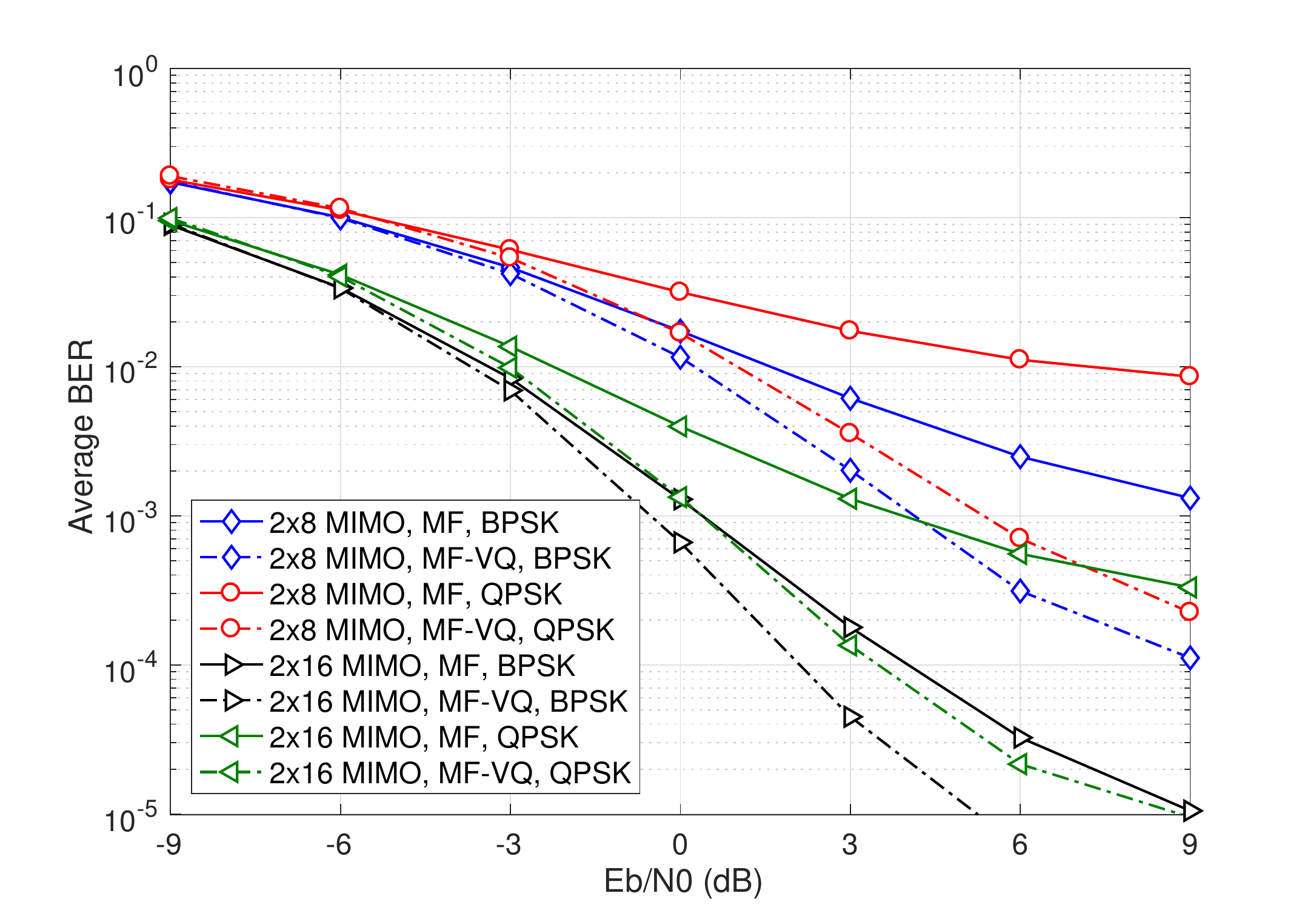}
\caption{\gls{ber} as a function of $E_b/N_0$ for \gls{mf}-\gls{vq} model in uncoded MIMO system with BPSK/QPSK modulation.}
\label{fig7}
\end{figure}

\figref{fig7} illustrates a more detailed performance evaluation for \gls{mf}-\gls{vq} model. The performance of \gls{ann} is compared with the conventional \gls{mf} equalizer. It is observed that \gls{mf}-\gls{vq} model largely improves the detection performance in all scenarios. For 2-by-8 \gls{mimo}, the sequence detection gain is approximately 3 dB for BPSK and 5 dB for QPSK; and for 2-by-16 \gls{mimo}, the gain is more than 2 dB for BPSK and 4 dB for QPSK at high \gls{snr} range.

\subsubsection*{Experiment 4} The aim of this experiment is to demonstrate the performance of \gls{mimo}-\gls{vq} model with different modulation schemes. For BPSK and QPSK modulation, \gls{mimo}-\gls{vq} is trained at $E_b/N_0=5$ dB with a mini-batch size of 500; for 8-PSK and 16-QAM modulation, the training $E_b/N_0$ is set at 8 dB. The reason for training ANN at different SNR points for different modulation schemes is because we found different SNR points will result in different detection performances in ANN-assisted MIMO communications. Specifically, training at lower SNRs can obtain an \gls{ann} receiver only works well at low-SNR range, and vice versa. In order to achieve the best performance throughout the whole SNR range, we tested a few SNR points and the above settings are observed to achieve the best end-to-end performance among others.

\begin{figure}[t!]
\centering
\includegraphics[width=3.5in]{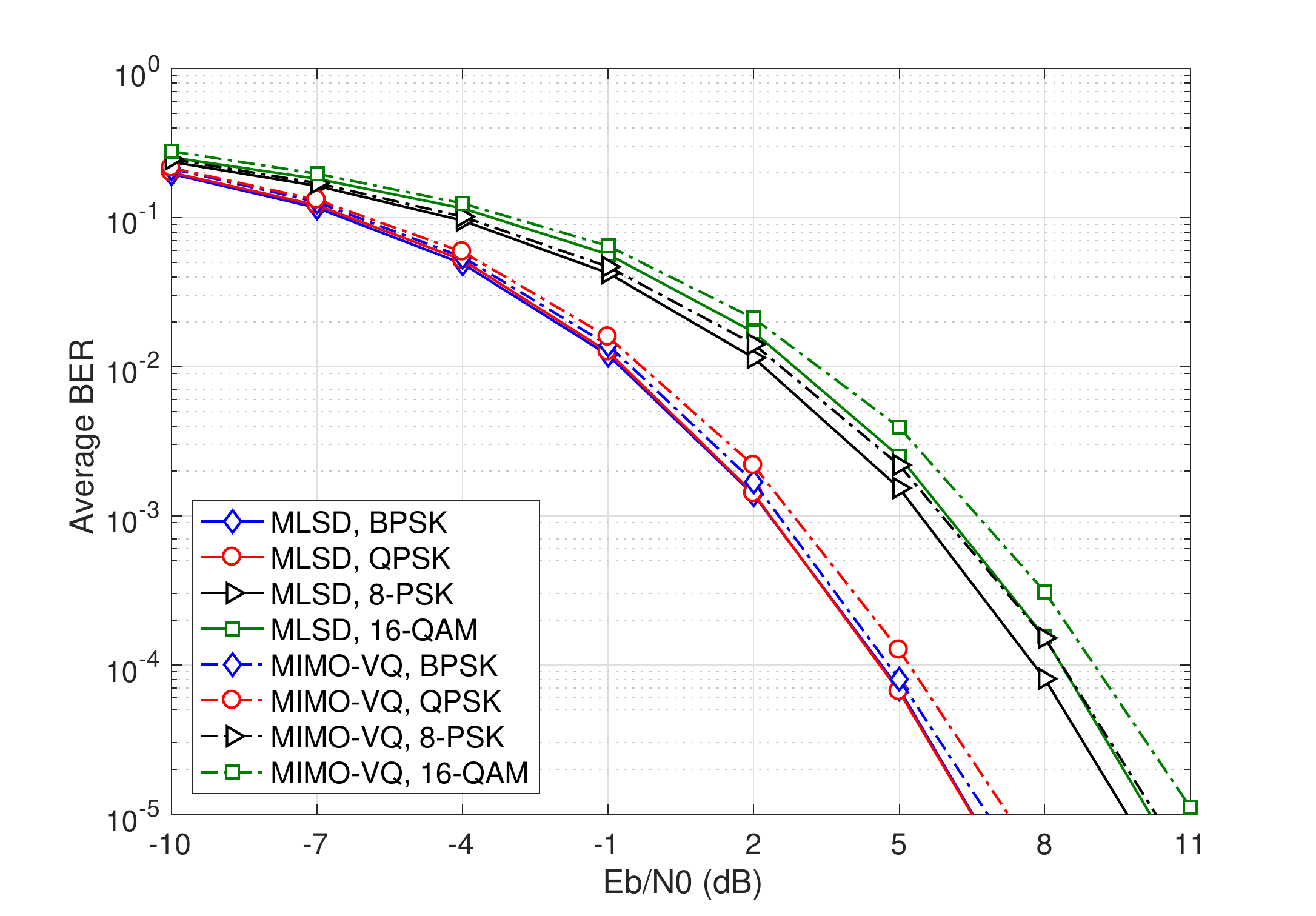}
\caption{\gls{ber} as a function of $E_b/N_0$ for \gls{mimo}-\gls{vq} model in uncoded 2-by-8 \gls{mimo} system with multiple modulation schemes.}
\label{fig8}
\end{figure}

\figref{fig8} illustrates the average \gls{ber} performance of \gls{mimo}-\gls{vq} model in uncoded 2-by-8 \gls{mimo} system with multiple modulation schemes. The baseline for performance comparison is the optimum \gls{mlsd}. It is observed that \gls{mimo}-\gls{vq} model is able to achieve near-optimum performance in all cases. Specifically, the performance gap for BPSK is almost negligible; for QPSK and 8-PSK, the gap is around 0.8 dB at \gls{ber} of $10^{-5}$; for 16-QAM, the gap increases to around 1.1 dB at \gls{ber} of $10^{-5}$. Again, the performance gap is introduced by channel learning inefficient as we discussed in Experiment 2. Moreover, it is observed that channel quantization loss is more emergent for higher modulation schemes. This is perhaps due to the denser constellation points in higher-order modulation schemes (e.g. 16-QAM).

\subsubsection*{Experiment 5} The aim of this experiment is to examine the detection performance of the proposed MNNet algorithm. Here we consider three different system models including overloaded ($M>N$), under loaded ($M<N$) and critically loaded ($M=N$) \gls{mimo} systems. Besides, several deep learning based \gls{mimo} detection algorithms are applied for performance comparison including DetNet, OAMPNet and TPG-Detector. Specifically, DetNet is implemented in 20 layers, and OAMPNet is implemented in 10 layers with 2 trainable parameters and a channel inversion per layer.

\begin{figure}[t!]
\centering
\includegraphics[width=3.5in]{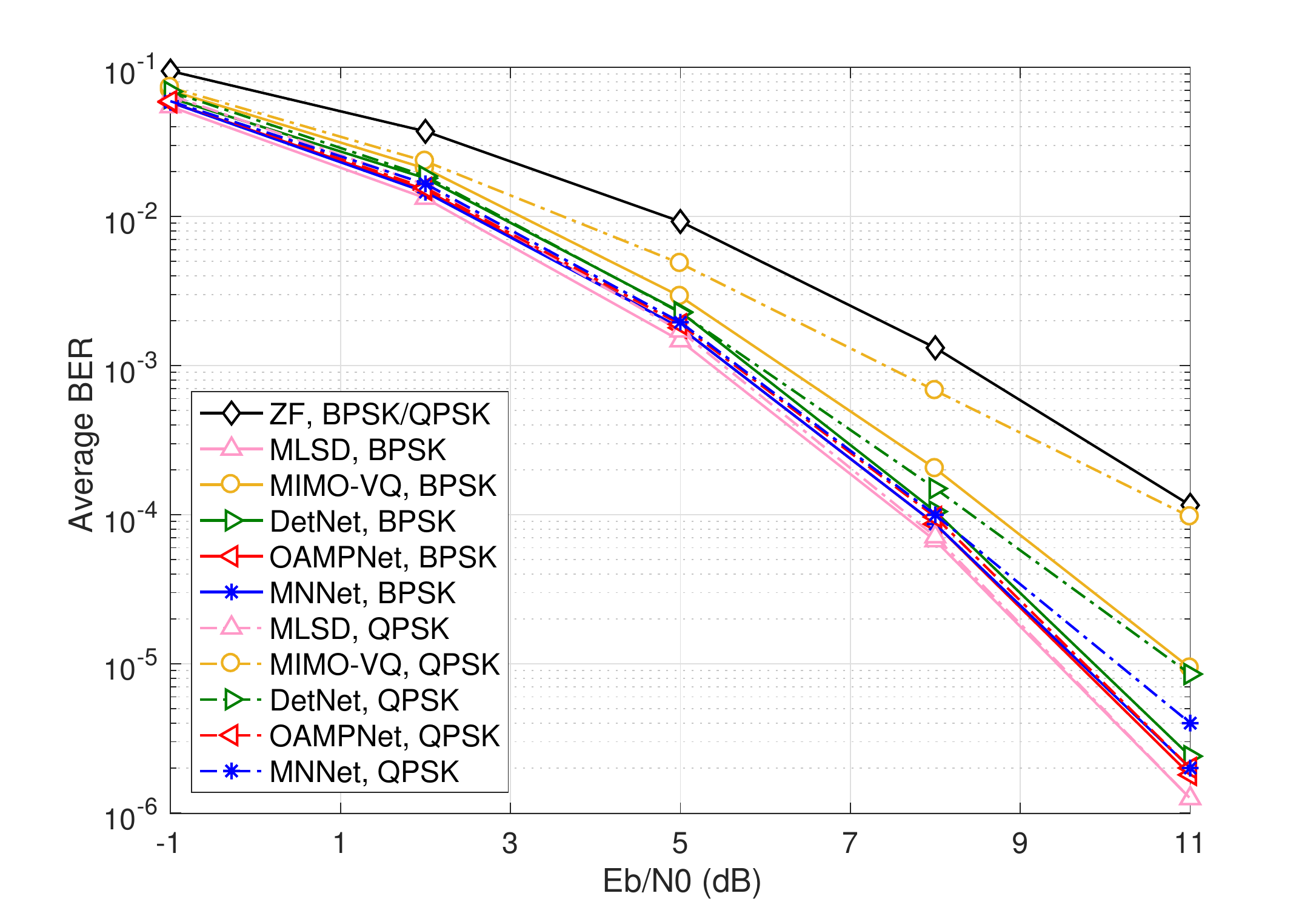}
\caption{\gls{ber} as a function of $E_b/N_0$ for different detection algorithms in uncoded 4-by-8 \gls{mimo} system with BPSK/QPSK modulation.}
\label{fig9}
\end{figure}

\figref{fig9} illustrates the average \gls{ber} performance of different detection algorithms in uncoded 4-by-8 \gls{mimo} system with BPSK/QPSK modulation. It is observed that all of the deep learning algorithms outperform the conventional \gls{zf} algorithm throughout the whole \gls{snr} range. Specifically, the \gls{mimo}-\gls{vq} model has its performance quickly moved away from the optimum due to the increasing number of transmit antennas. The performance degraded around 1 dB for BPSK and 3.2 dB for QPSK. DetNet achieves a good performance on BPSK, but its performance gap with \gls{mlsd} increases to 1.7 dB when we move to QPSK. Meanwhile, OAMPNet and MNNet approaches are both very close to optimum for BPSK modulation over the whole \gls{snr} range. The performance of MNNet slightly decreased as we move to QPSK; the performance gap to \gls{mlsd} is around 0.4 dB at \gls{ber} of $10^{-5}$. Despite, the proposed MNNet does not require matrix inversion which makes it more advantage in terms of computational complexity.

\begin{figure}[t!]
\centering
\includegraphics[width=3.5in]{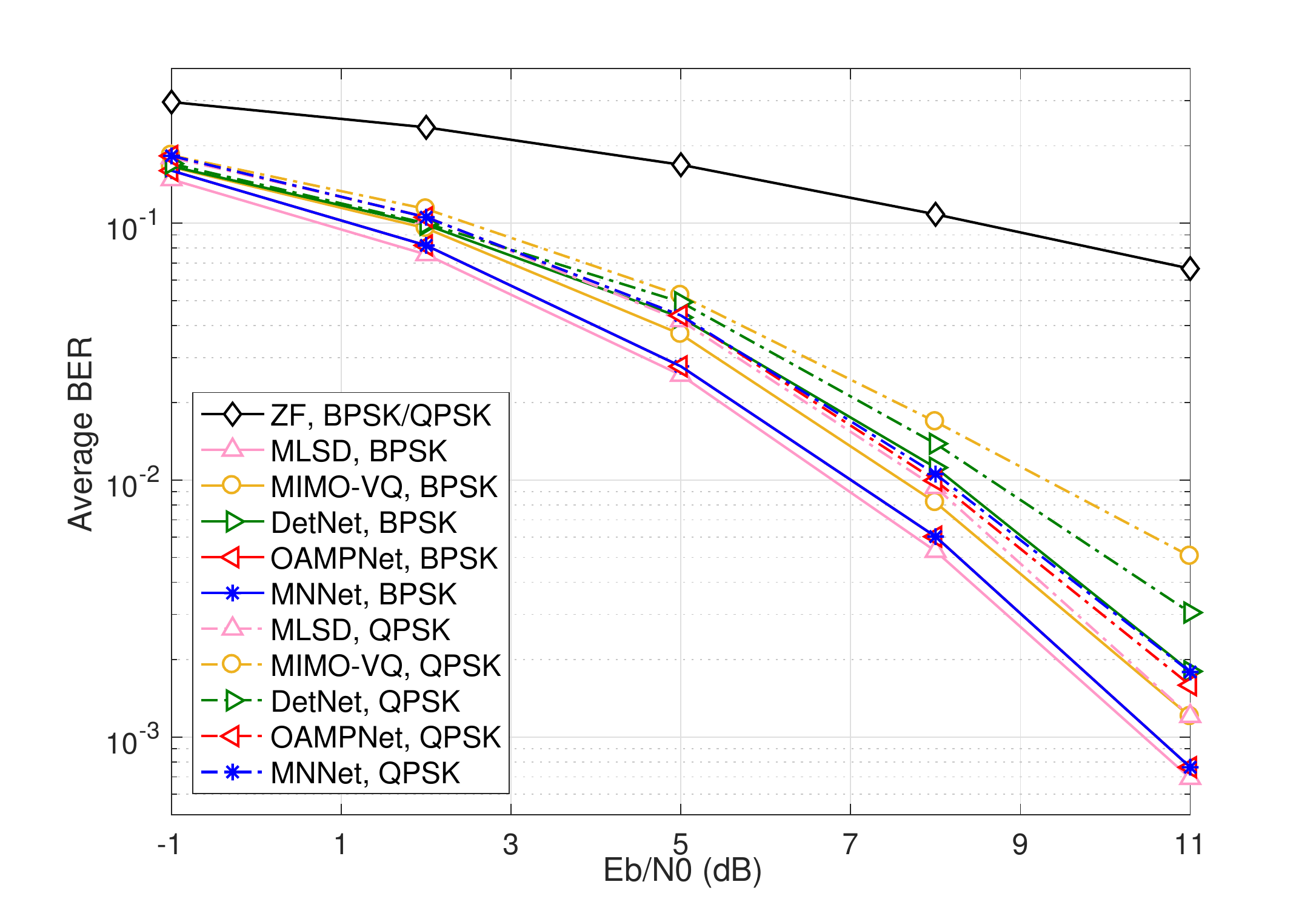}
\caption{\gls{ber} as a function of $E_b/N_0$ for different detection algorithms in uncoded 4-by-4 \gls{mimo} system with BPSK/QPSK modulation.}
\label{fig10}
\end{figure}

\figref{fig10} illustrates the average \gls{ber} performance of different detection algorithms in uncoded 4-by-4 \gls{mimo} system with BPSK/QPSK modulation. Similar results have been observed since all of the deep learning based algorithms largely outperform \gls{zf} receiver across different modulation schemes over a wide range of \gls{snr}s. For BPSK modulation, DetNet with its performance gap between \gls{mlsd} around 1 dB at \gls{ber} of $10^{-2}$; simple \gls{mimo}-\gls{vq} achieves a slightly better performance with a gap around 0.8 dB. OAMPNet and MNNet are both very close to optimum performance at whole \gls{snr} range. For QPSK modulation, DetNet outperforms \gls{mimo}-\gls{vq} with a performance improvement approximately 0.5 dB at \gls{ber} of $10^{-2}$. Meanwhile, OMAPNet and MNNet both achieve near-optimum performance and OMAPNet slightly outperforms MNNet at high \gls{snr} range with a performance gap less than 0.2 dB.

\begin{figure}[t!]
\centering
\includegraphics[width=3.5in]{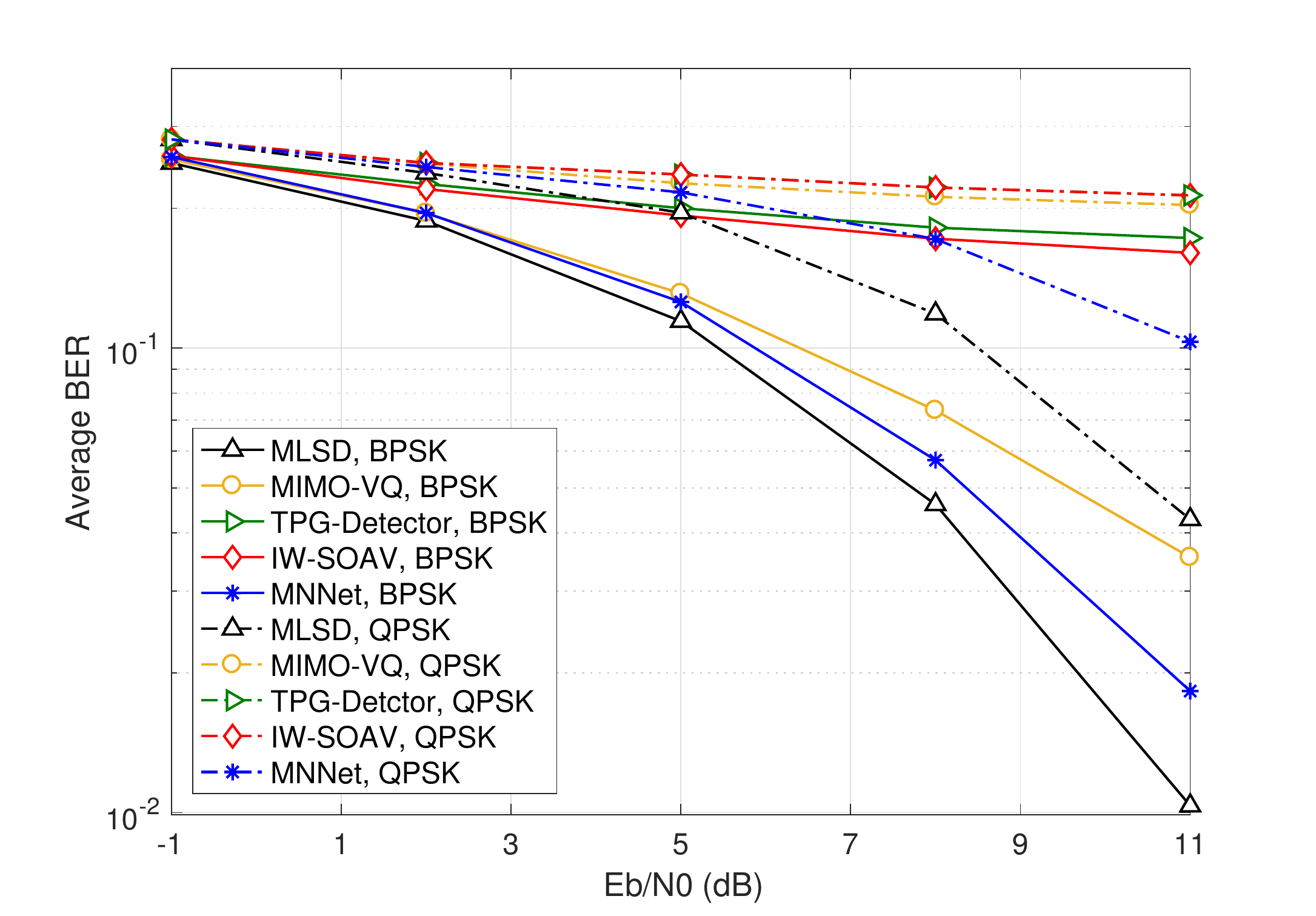}
\caption{\gls{ber} as a function of $E_b/N_0$ for different detection algorithms in uncoded 8-by-4 \gls{mimo} system with BPSK/QPSK modulation.}
\label{fig11}
\end{figure}

\figref{fig11} illustrates the average \gls{ber} performance of different detection algorithms in uncoded 8-by-4 \gls{mimo} system with BPSK/QPSK modulation. The baselines for performance comparison including \gls{mlsd}, IW-SOAV and TPG-Detector. Here we do not employ any channel-inversion based algorithms (e.g. ZF and OAMPNet) due to matrix singularity. For BPSK, it is observed that simple \gls{mimo}-\gls{vq} is able to outperform both IW-SOAV and TPG-Detector; and the proposed MNNet algorithm further improves the performance of \gls{mimo}-\gls{vq} model for approximately 1.7 dB at high \gls{snr} range. Beside, the gap between MNNet and \gls{mlsd} is around 1 dB. However, its performance gap with \gls{mlsd} increases as we move to QPSK (e.g. around 2.3 dB at \gls{ber} of $10^{-1}$). Meanwhile, the other three approaches almost fail to detect with an error floor at high \gls{snr} range due to system outage.}

\section{Conclusion}
\label{sec6}
{\color{sblue} In this paper, we have intensively studied fundamental behaviors of the \gls{ann}-assisted \gls{mimo} detection. It has been revealed that the \gls{ann}-assisted \gls{mimo} detection is naturally a \gls{mimo}-\gls{vq} procedure, which includes joint statistical channel quantization and message quantization. Our mathematical work has shown that the quantization loss of \gls{mimo}-\gls{vq} grows linearly with the number of transmit antennas; and due to this reason, the \gls{ann}-assisted \gls{mimo} detection scales poorly with the size of \gls{mimo}. To tackle the scalability problem, we investigated a \gls{mnn} based  \gls{mimo}-\gls{vq} approach, named MNNet, which can take the advantage of \gls{pic} to scale up the \gls{mimo}-\gls{vq} on each super layer, and consequently the whole network. Computer simulations have demonstrated that the MNNet is able to achieve near-optimum performances in a wide range of communication tasks. In addition, it is shown that a well-modularized network architecture can ensure better computation efficiency in both the \gls{ann} training and evaluating procedures.}

\ifCLASSOPTIONcaptionsoff
  \newpage
\fi

\bibliographystyle{IEEEtran}
\bibliography{./ref}
\end{document}